\newlength{\defbaselineskip}
\newcommand{\setlinespacing}[1]%
{\setlength{\baselineskip}{#1 \defbaselineskip}}
\theoremstyle{plain}
\newtheorem{theorem}{Theorem}[section]
\newtheorem{lemma}[theorem]{Lemma}
\newtheorem{proposition}[theorem]{Proposition}
\newtheorem{corollary}[theorem]{Corollary}
\theoremstyle{definition}
\newtheorem{definition}[theorem]{Definition}
\newtheorem{ass}[theorem]{Assumption}
\theoremstyle{remark}
\newtheorem{remark}[theorem]{Remark}
\numberwithin{equation}{section}
\DeclareMathOperator*{\esssup}{ess\,sup}
\newcommand{\cM}{\mathcal{M}}
\newcommand{\bG}{\mathbb{G}}
\newcommand{\stkout}[1]{\ifmmode\text{\sout{\ensuremath{#1}}}\else\sout{#1}\fi}
\newcommand{\E}{{\mathbb E}}
\newcommand{\R}{{\mathbb R}}
\newcommand{\cE}{\ensuremath{\mathcal{E}}}
\newcommand{\dd}{\ensuremath{\operatorname{d}\! }}
\newcommand{\dt}{\ensuremath{\operatorname{d}\! t}}
\newcommand{\de}{\ensuremath{\operatorname{d}\! e}}
\newcommand{\ds}{\ensuremath{\operatorname{d}\! s}}
\newcommand{\dr}{\ensuremath{\operatorname{d}\! r}}
\newcommand{\dw}{\ensuremath{\operatorname{d}\! W}}
\begin{document}
\title{A System of BSDEs with Singular Terminal Values Arising in Optimal Liquidation with Regime Switching}

% Information for first author
\author{Guanxing Fu\footnote{Department of Applied Mathematics, The Hong Kong Polytechnic University, Kowloon, Hong Kong, China; \url{guanxing.fu@polyu.edu.hk}} \and Xiaomin Shi\footnote{School of Statistics and Mathematics, Shandong University of Finance and Economics, Jinan 250100, China; \url{shixm@mail.sdu.edu.cn}} \and Zuo Quan Xu\footnote{Department of Applied Mathematics, The Hong Kong Polytechnic University, Kowloon, Hong Kong, China; \url{maxu@polyu.edu.hk}}
}

\maketitle

\begin{abstract}
We investigate a stochastic control problem with regime switching, which arises in the context of an optimal liquidation problem involving dark pools and multiple regimes. A novel aspect of this model is the introduction of a system of backward stochastic differential equations with jumps and singular terminal values, appearing for the first time in the literature. We establish the existence result for this system, thereby solving the stochastic control problem with regime switching. More importantly, we also achieve a uniqueness result for this system, which contrasts with the minimal solutions typically found in most related literature.
\end{abstract}

{\bf AMS Subject Classification:} 93E20, 91B70, 60H30

{\bf Keywords:}{ Optimal liquidation, system of BSDEs with singular terminal values, regime switching }

\section{Introduction and overview}
Consider the following stochastic control problem with regime switching:
\begin{align}\label{cost}
	J(\xi,\beta;x_0,i_0)=\E\left[\int_0^T\left(\eta_t^{\alpha_{t-}}\xi_t^2+\lambda_t^{\alpha_{t-}}X_t^2
	+\int_{\cE}\gamma_t^{\alpha_{t-}}(e)|\beta_t(e)|^2\nu(\de)\right)\dt\right]\rightarrow \min_{\xi,\beta} ,
\end{align}
where the dynamics $X$, controlled by $(\xi,\beta)$, satisfies
\begin{align}\label{state}
	X_t=x_0 - \int_0^t\xi_s\ds - \int_0^t\int_{\cE}\beta_s(e)N(\ds,\de), \quad X_T=0.
\end{align}
Here, $N$ is a Poisson random measure with intensity measure $\nu$, $\alpha$ is a Markov chain with $\ell$ states and $i_0$ is the regime before the first switching of $\alpha$. {Without loss of generality (w.l.o.g.), let $x_0$ be a positive constant.}

The special case of \eqref{cost}-\eqref{state}, where $N=\nu=0$, $\ell=1$ and $\eta$ and $\lambda$ are positive constants, corresponds to the classical optimal liquidation model of Almgren and Chriss \cite{AC-2001}. Specifically, let $X_t$ denote the position of the investor at time $t\in[0,T]$, who would like to close her initial position $X_0=x_0$ with $x_0>0$ for selling, subject to the liquidation constraint $X_T=0$, by submitting market order $\xi$ to the traditional exchange. Due to the limited liquidity, the market orders, which are visible to the public, will typically add a temporary impact $\eta\xi$ to the benchmark price and result in a quadratic trading cost $\eta\xi^2$. The second term in \eqref{cost} is considered as a risk aversion, penalizing slow liquidation. %The Almgren and Chriss model has been widely extended; refer to e.g. \cite{Gatheral-2012,GH-2017,Schied-2009} among many others. In particular, Gatheral et al \cite{Gatheral-2012} and Graewe and Horst \cite{GH-2017} consider liquidation problems with (non)-Markovian transient impact, while Schied and Sch\"oneborn \cite{Schied-2009} consider an infinite-horizon optimal portfolio liquidation problem for a von Neumann-Morgenstern investor.

In contrast to the market orders, the passive orders submitted to dark pools are hidden from the public view and thus results in no price impact, however, its execution is uncertain. We use a Poisson process to model the execution time and $\beta$ denotes the passive order. Instead of moving the price in an unfavorable direction, passive orders will lead to an adverse selection cost denoted by $\gamma\beta^2$; see e.g. \cite{HN-2014, Kratz-2014,Kratz-2018}.

The main characteristic of optimal liquidation problem is a singular terminal condition of the value function, characterized by ODE, PDE or BS(P)DE depending on the model assumptions. Two main techniques have been developed to address the singularity. The first approach is to penalize the final open position and transform the backward equations with singular terminal values into ones with bounded terminal values, and then send the degree of penalization to infinity; refer to \cite{AJK-2014,GHQ-2015,KP-2016,PZ-2019}. {Among them, motivated by portfolio liquidation problems, Ankirchner et al \cite{AJK-2014} solve a BSDE with singular terminal values and integrable coefficients, Graewe et al \cite{GHQ-2015} consider a non-Markovian liquidation problem by studying a BSPDE with singular terminal values, {Kruse and Popier \cite{KP-2016}} study a BSDE with jumps and singular terminal values in a general filtered probability space, arising in a liquidation problem with dark pools, and Popier and Zhou \cite{PZ-2019} study an optimal liquidation problem with uncertainty by studying a second-order BSDEs with singular terminal values}. The second approach investigates the asymptotic behavior of the solution around the maturity and transform the backward equations with singular terminal vlaues into ones with bounded terminal values but singular generators;
refer to \cite{GH-2017,GHS-2018,GP-2021}, {where Graewe and Horst \cite{GH-2017} study a three-dimensional fully coupled BSDE with one component having singular terminal values arising in a liquidation problem with transient impact, Graewe et al \cite{GHS-2018} solve a PDE with singular terminal value in a Markovian and price-sensitive liquidation model, and Graewe and Popier \cite{GP-2021} establish the wellposedness result for a very general class of BSDEs with singular terminal values. }

We will study the stochastic control problem \eqref{cost}-\eqref{state} in a non-Markovian framework and characterize the value function and optimal control by a system of BSDEs with jumps and singular terminal values.
In constrast to all aforementioned works, we assume that all parameters depend on the Markov chain $\alpha$. Our explanation in mind is to capture the regime shift of the market. For example, $\eta^\alpha$ captures the illiquidity level and $\lambda^\alpha$ is proportional to the price volatility; we use $\alpha$ to model the regimes of liquidity and volatility levels: low, medium and high if e.g. $\ell=3$.
The introduction of regime switching increases the dimensionality and leads to a system of coupled BSDEs with singular terminal values, which appears in the literature for the very first time; the multi-dimensional aspect in \cite{HX-2019} arises from the model's use of multiple assets, resulting in a matrix-valued BSDE with singular terminal values that differs from our system of coupled BSDEs.
The contribution of our paper is two-fold: existence and uniqueness results of the BSDE system. To obtain the existence result, we use the penalization method by truncating the singular terminal value with a finite one. To achieve a limit as the penalization approaches infinity, the challenge lies in establishing appropriate upper and lower bounds for the solution to the truncated the BSDE system. To address this, we develop a tailored comparison theorem for multidimensional BSDEs with monotone drivers. The key insight is that we can compare the solution to our the BSDE system with the solutions to two single ODEs, which serve as the upper and lower bounds according to our comparison theorem. Our existence result can be viewed as a multidimensional extension of \cite{KP-2016}, where Kruse and Popier only considered one-dimensional problem but in a very general setting. However, the solution established in \cite{KP-2016} is only shown to be minimal. Thus, compared to the existence result, our uniqueness result is more significant in the literature on BSDEs with singular terminal values.
Currently, most existing results concentrate on the minimal solution; see, for example, \cite{AJK-2014,KP-2016,Popier-2006,Popier-2007,SKP-2019} for BSDEs with singular terminal values, both with and without jumps. To the best of our knowledge, the uniqueness result for BSDEs with singular terminal values has only been addressed in \cite{GH-2017,GHS-2018,GP-2021,HX-2021}, all of which do not include jumps. Thus, even without regime switching, our uniqueness result is new in literature. To establish the uniqueness result, our approach is different from that in \cite{GP-2021}, which uses the equivalence between BSDEs with singular values and those with singular generators. Instead, our method is inspired by \cite{GHS-2018,HX-2021}. We first verify that the solution obtained in the existence result is minimal within a certain class of solutions. Then, we demonstrate that any solution to the system coincides with this minimal solution by showing that the value function of the optimal control problem \eqref{cost}-\eqref{state} can be characterized by the minimal solution. A key aspect of our proof is to establish an appropriate a priori estimate for any solution to the system of BSDEs {\it with singular terminal values}. Our approach heavily relies on our multidimensional comparison theorem and the liquidation constraint $X_T = 0$.

Before delving into the main text of the paper, we summarize results on stochastic control with regime switching in optimal liquidation problems, which or at least the conception are already explored in the literature; see \cite{Bian-2016,Pemy-2006,Pemy-2008,Elliott-2019}. Among these,
Pemy and Zhang \cite{Pemy-2006} formulate the liquidation problem as an optimal stopping problem with regime switching, while Pemy et al. \cite{Pemy-2008} study a liquidation problem with a non-oversell constraint, which differs from our terminal state constraint. The constraint in Bian et al \cite{Bian-2016} is an absorption constraint, which does not lead to a backward equation with a singular terminal value. Similarly, the model in Siu et al \cite{Elliott-2019} operates in discrete time and does not yield a backward equation with a singular terminal value either.

%Bian et al. \cite{Bian-2016} and Siu et al. \cite{Elliott-2019}. The constraint in \cite{Bian-2016} is an absorption constraint, which does not lead to a backward equation with a singular terminal value. Similarly, the model in \cite{Elliott-2019} operates in discrete time and does not yield a backward equation with a singular terminal value either.
%For other related works on regime switching, see \cite{Pemy-2006,Pemy-2008}. Among these, Pemy and Zhang \cite{Pemy-2006} formulate the liquidation problem as an optimal stopping problem with regime switching, while Pemy et al. \cite{Pemy-2008} study a liquidation problem with a non-oversell constraint, which differs from our terminal state constraint.

The rest of the paper is organized as follows. This section concludes with an introduction to the notation and standing assumptions. In Section \ref{sec:unconstrained}, we consider the unconstrained stochastic control problem where the open terminal position will be penalized. Section \ref{sec:BSDE-existence} establishes the existence result for the BSDE system with singular terminal values. Specifically, we first truncate the terminal value to a finite one and then consider the limit as the truncation approaches infinity. In addition to confirming that this limit is a solution to the BSDE system, we also demonstrate that it is the minimal solution. In Section \ref{sec:wellposedness-control}, we address the stochastic control problem \eqref{cost}-\eqref{state}. More importantly, we establish the uniqueness result for the BSDE system with singular terminal values.

\subsection*{Notation}

{\bf Driving Stochastic Processes.}
Let $(\Omega, \mathcal{F}, \mathbb{P})$ be a complete probability space, and let $W$ denote a standard $n$-dimensional Brownian motion. We denote $N$ as a Poisson random measure $N(\dt,\de)$ on $\mathbb{R}^+ \times \mathcal{E}$, where $\mathcal{E} \subseteq \mathbb{R}^l \setminus \{0\}$ is a Borel set. The Poisson random measure is assumed to be induced by a stationary Poisson point process with a stationary intensity measure $\nu(\de)\dt$, satisfying $\nu(\mathcal{E}) < \infty$. The compensated Poisson random measure is denoted by $\widetilde{N}(\dt,\de)$.

Let $\alpha$ be a continuous-time stationary Markov chain taking values in $\mathcal{M} = \{1, 2, \ldots, \ell\}$ with $\ell \geq 1$. The Markov chain has a generator $Q = (q^{ij})_{\ell \times \ell}$, where $q^{ij} \geq 0$ for $i \neq j$ and $\sum_{j=1}^{\ell} q^{ij} = 0$ for all $i \in \mathcal{M}$. Thus, we have $q^{ii} \leq 0$ for all $i \in \mathcal{M}$.

We assume that $W$, $N$, and $\alpha$ are independent of each other.

{\bf Filtrations}. Denote by $\mathbb F:=(\mathcal F_t)_{t\geq 0}$ the augmented natural filtration of $(W,N,\alpha)$, and by $\mathbb G:=(\mathcal G)_{t\geq 0}$ the augmented natural filtration of $(W,N)$.
Denote by $\mathcal{P}^{\mathbb F}$ (resp. $\mathcal{P}^{\bG}$) the $\sigma$-algebra of $\mathbb{F}$- (resp. $\mathbb{G}$-) predictable subsets of $\Omega\times[0,T]$, and by $\mathcal{B}(\cE)$ the Borel $\sigma$-algebra of $\cE$.

{\bf Spaces.} For each $\hat t\in[0,T]$, each space $\mathcal S$, each filtration $\mathcal H=\mathbb F$ or $\mathbb G$, and each $\sigma$-algebra $\mathcal Q=\mathcal P^{\mathbb F}$ or $\mathcal P^{\mathbb G}$, we define the following spaces of stochastic processes
%We use the following notation throughout the paper:
\begin{align*}
	L^{2}_{\mathcal H}(0, \hat t; \mathcal S )&=\Big\{\phi:[0, \hat t]\times\Omega\rightarrow
	\mathcal S \;\Big|\;\phi\mbox{ is } \mathcal H%
	\mbox{-predictable and }\E\left[\int_{0}^{\hat t}|\phi_t|^{2}\dt\right]<\infty
	\Big\}, \\
	L^{\infty}_{\mathcal H}(0, \hat t;\mathcal S)&=\Big\{\phi:[0, \hat t]\times\Omega
	\rightarrow \mathcal S \;\Big|\;\phi\mbox{ is } \mathcal H%
	\mbox{-predictable and } \esssup_{\omega,t}|\phi_t(\omega)|<\infty \Big\},\\
S^{2}_{\mathcal H}(0, \hat t; \mathcal S )&=\Big\{\phi:[0,\hat t]\times\Omega\to \mathcal S
		\;\Big|\;\phi \mbox{ is c\`adl\`ag, $\mathcal H$-adapted and } \mathbb E\left[\sup_{0\leq t\leq\hat t }|\phi_t|^2\right]<\infty \Big\},\\
		S^{\infty}_{\mathcal H}(0, \hat t; \mathcal S )&=\Big\{\phi:[0,\hat t]\times\Omega\to \mathcal S
	\;\Big|\;\phi \mbox{ is c\`adl\`ag, $\mathcal H$-adapted and } \esssup_{\omega,t}|\phi_t(\omega)|<\infty \Big\},\\
%	L^{2,\nu}(\mathbb{R})&=\Big\{\phi:\cE\rightarrow\mathbb{R}\mbox{ is measurable with} \ \|\phi(\cdot)\|^2_{\nu}:=\int_{\cE}\phi(e)^2\nu(\de)<\infty\Big\},\\
	L^{2}_{\mathcal{Q}}(0, \hat t;\mathcal S)&=\Big\{\phi:[0, \hat t]\times\Omega\times\cE
	\rightarrow \mathcal S \;\Big|\;\phi \mbox{ is }\mathcal{Q}\otimes\mathcal{B}(\cE)%
	\mbox{-measurable and } \E\left[\int_{0}^{\hat t}\int_{\cE}|\phi_t(e)|^{2}\nu(\de)\dt\right]<\infty\Big\},\\
	L^{\infty}_{\mathcal{Q}}(0, \hat t; \mathcal S)&=\Big\{\phi:[0, \hat t]\times\Omega\times\cE
	\rightarrow \mathcal S \;\Big|\;\phi \mbox{ is }\mathcal{Q}\otimes\mathcal{B}(\cE)%
	\mbox{-measurable and } \esssup_{\omega,t}|\phi_t(\omega)|<\infty \Big\}.
\end{align*}
Moreover, we will also use the following space of essentially bounded random variables
\begin{equation*}
	L^{\infty}_{\mathcal{G}_T}=\Big\{\zeta :\Omega\rightarrow
\mathbb{R}\;\Big|\;\zeta \mbox { is }\mathcal{G}_{T}\mbox{-measurable and } \esssup_{\omega}|\zeta(\omega)|<\infty\Big\}.
\end{equation*}
The space of admissible strategies is defined as
\begin{align*}
	\mathcal{A}_0:=\Big \{(\xi,\beta)\in L^2_\mathbb{F}(0,T;\mathbb{R})\times L^{2}_{\mathcal{P}^{\mathbb F}}(0,T;\mathbb{R})\;\Big|\;X^{\xi,\beta}_T=0 \Big\},
\end{align*}
where $X^{\xi,\beta}$ is the position process corresponding to the trading rate $(\xi,\beta)$.

{\bf Convention.} In the above notation of spaces, $(0,\hat t)$ will be omitted if $\hat t=T$. %In the estimate, $C$ is a positive constant that may vary from line to line.

The following assumption is in force throughout.
\begin{ass}\label{ass1}
The impact processes satisfy $\eta^i, \lambda^i\in L_{\mathbb{G}}^{\infty}(\R^+)$ and $\gamma^i\in L^{\infty}_{\mathcal{P}^{\mathbb G}}(\R^+)$ for each $i\in\cM$. There exist positive constants $\eta_\star$, $\eta^\star$, $\gamma^\star$ and $\lambda^\star$ such that $\eta_\star\leq \eta^i_t\leq \eta^\star$, $\lambda^i_t\leq \lambda^\star$, $\gamma^i_t(e)\leq \gamma^\star$ for all $(t,e,i)\in[0,T]\times\mathcal E\times \cM$.
\end{ass}

\section{The unconstrained stochastic control problem}\label{sec:unconstrained}
In this section, we consider a stochastic control problem without liquidation constraint, however, the terminal open position is penalized.

For each constant $L>0$, we consider the following unconstrained control problem
\begin{equation}\label{cost-L}
	\begin{split}
		J^L_t(\xi,\beta;x,i)
		:=
		\E\left[ \left. \int_t^T\left(\eta_r^{\alpha_{r-}}|\xi_r|^2+\lambda_r^{\alpha_{r-}}|X_r|^2
		+\int_{\cE}\gamma_r^{\alpha_{r-}}(e)|\beta_r(e)|^2\nu(\de)\right)\dr+L|X_T|^2\;\right|\mathcal F_t\right]\rightarrow\min,
	\end{split}
\end{equation}
subject to
\begin{equation}\label{state-L}
	X_s=x-\int_t^s\xi_r\dr-\int_t^s\int_{\mathcal E}\beta_r(e)\,N(\dr,\de),\quad s\geq t.
\end{equation}
Here, $(x,i)$ in \eqref{cost-L} is the pair of initial state and initial regime starting at time $t$.

Define the value function as
$$V^L_t(x,i):=\inf_{(\xi,\beta)\in L^2_\mathbb{F}(\mathbb{R})\times L^{2}_{\mathcal{P}^{\mathbb F}}(\mathbb{R})}J^L_t(\xi,\beta;x,i)
	%	=&~\inf_{(\xi,\beta)\in L^2_\mathbb{G}(\mathbb{R})\times L^{2}_{\mathcal{P}^{\mathbb G}}(\mathbb{R})}
	%	\E\left[\int_0^T\left(\eta_t^{\alpha_{t-}}|\xi_t|^2+\lambda_t^{\alpha_{t-}}|X_t|^2
	%	+\int_{\cE}\gamma_t^{\alpha_{t-}}(e)|\beta_t(e)|^2\nu(\de)\right)\dt+L|X_T|^2\right],
.$$ When $t=0$, we drop the dependence of $V^L$ and $J^L$ on $t=0$.

By necessary and sufficient stochastic maximum principle, there is a one-to-one correspondence between the optimal control and the value function of control problem \eqref{cost-L}-\eqref{state-L} via
\begin{equation}\label{control-L}
	\hat \xi^{L}_s= \frac{ Y^{L,\alpha_{s-}}_{s} }{ \eta^{\alpha_{s-}}_s } X_{s},\qquad \hat\beta^L_s = \frac{Y^{L,\alpha_{s-}}_{s-}+\Psi^{L,\alpha_{s-}}_s}{\gamma^{\alpha_{s-}}_s+Y^{L,\alpha_{s-}}_{s-}+\Psi^{L,\alpha_{s-}}_{s}} X_{s-}, \qquad s\geq t
\end{equation}
and
\begin{equation}\label{value-L}
	V^L_t( x,i )=Y^{L,i}_tx^2,
\end{equation}
where $(Y^{L,i},Z^{L,i},\Psi^{L,i})$ satisfies the following the BSDE system
\begin{equation}\label{BSDE-L}
	\left\{\begin{split}
		-\dd Y^{i}_t
		= &~ \left( \lambda_t^i-\frac{(Y_t^{i})^2}{\eta_t^i}-\int_{\cE}\frac{(Y^{i}_t+\Psi_t^{i}(e))^2}{\gamma_t^i(e)+Y_t^{i}+\Psi_t^{i}(e)} \mathbf{1}_{ \{ Y^{i}_t+\Psi^{i}_t(e)>0 \} }\,\nu(\de)
		+\sum_{j=1}^{\ell}q^{ij}Y_t^{j}\right) \dt \\
		&~ - (Z_t^{i})^{\top}\dw_t - \int_{\cE}\Psi_t^{i}(e)\widetilde N(\dt,\de), \\
		Y^{i}_T=&~L, \quad i\in\cM.
	\end{split}\right.
\end{equation}
Thus, to prove there exists a solution to the BSDE system \eqref{BSDE-L}, it is sufficient to prove that the control problem \eqref{cost-L}-\eqref{state-L} has a solution.
\begin{lemma}
	The control problem \eqref{cost-L}-\eqref{state-L} has an optimal control $(\xi^*,\beta^*)\in L^2_\mathbb{F}(\mathbb{R})\times L^{2}_{\mathcal{P}^{\mathbb F}}(\mathbb{R}).$
\end{lemma}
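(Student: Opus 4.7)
The plan is to apply the direct method of the calculus of variations on the Hilbert space $\cH := L^2_{\bF}(\R) \times L^2_{\cP^{\bF}}(\R)$. First, $J^L \ge 0$ and $V^L \le J^L(0,0;x,i) \le (T\lambda^\star+L)x^2 < \infty$ by Assumption \ref{ass1}, so $V^L$ is finite. Picking any minimizing sequence $(\xi^n,\beta^n) \in \cH$ with $J^L(\xi^n,\beta^n;x,i) \downarrow V^L$, the lower bounds $\eta^i \ge \eta_\star > 0$ and $L > 0$ yield
$$
\eta_\star\, \E\!\int_0^T |\xi^n_r|^2 \dr + L\, \E|X^n_T|^2 \le J^L(\xi^n,\beta^n;x,i) \le V^L + 1,
$$
which bounds $\xi^n$ in $L^2_\bF(\R)$ and $X^n_T$ in $L^2(\cF_T)$ uniformly in $n$.

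The key step is a uniform bound on $\beta^n$ in $L^2_{\cP^{\bF}}(\R)$. From \eqref{state-L},
$$
\int_0^T\!\!\int_\cE \beta^n_r(e)\, N(\dr,\de) = x - X^n_T - \int_0^T \xi^n_r\, \dr,
$$
whose $L^2$-norm is already under control. Decomposing the left-hand side as $M^n_T + A^n_T$, where $M^n_T = \int_0^T\!\int_\cE \beta^n_r(e)\widetilde N(\dr,\de)$ is the compensated Poisson martingale satisfying $\|M^n_T\|_{L^2}^2 = \|\beta^n\|_{L^2_{\cP^{\bF}}}^2$ by It\^o's isometry, and $A^n_T = \int_0^T\!\int_\cE \beta^n_r(e)\nu(\de)\dr$ satisfies $\|A^n_T\|_{L^2}^2 \le T\nu(\cE)\|\beta^n\|_{L^2_{\cP^{\bF}}}^2$ by Cauchy-Schwarz, the reverse triangle inequality gives $(1-\sqrt{T\nu(\cE)})\,\|\beta^n\|_{L^2_{\cP^{\bF}}} \le \|M^n_T + A^n_T\|_{L^2}$ whenever $T\nu(\cE) < 1$. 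For general $T$, I would partition $[0,T]$ into finitely many subintervals of length $\delta < 1/\nu(\cE)$ and apply the same estimate on each subinterval, controlling the endpoint values of $X^n$ via the dynamics, to conclude that $\|\beta^n\|_{L^2_{\cP^{\bF}}}$ is uniformly bounded.

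Once $(\xi^n,\beta^n)$ is bounded in $\cH$, I pass to a weakly convergent subsequence $(\xi^n,\beta^n) \rightharpoonup (\xi^*,\beta^*) \in \cH$. The cost $J^L$ is a convex, norm-continuous quadratic functional on $\cH$ (the coefficients $\eta^i, \lambda^i, \gamma^i$ are bounded by Assumption \ref{ass1} and the state map $(\xi,\beta) \mapsto X$ is affine continuous), hence weakly lower semicontinuous. Therefore $V^L \le J^L(\xi^*,\beta^*;x,i) \le \liminf_n J^L(\xi^n,\beta^n;x,i) = V^L$, and $(\xi^*,\beta^*)$ is an optimal control. The hardest step is the coercivity for $\beta^n$ in the second paragraph: since Assumption \ref{ass1} permits $\gamma^i(e)$ (and $\lambda^i$) to vanish, neither running cost term directly controls $\|\beta^n\|$, and the required bound must be extracted from the terminal penalty $L|X^n_T|^2$ through the martingale-drift decomposition of the Poisson integral, with a careful iterative argument needed when $T\nu(\cE)$ is not small.
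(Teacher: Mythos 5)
Your overall framework (direct method, weak compactness, convexity plus strong continuity giving weak lower semicontinuity) is exactly the paper's, and the bound on $\xi^n$ and the final weak-limit step are fine. The genuine gap is precisely where you flag it: the uniform bound on $\|\beta^n\|_{L^2_{\mathcal{P}^{\mathbb F}}}$. Your martingale--drift decomposition closes only when $T\nu(\cE)<1$, and the proposed repair by partitioning $[0,T]$ into short subintervals is circular. On a subinterval $[t_k,t_{k+1}]$ you would need $\|X^n_{t_k}\|_{L^2}$ and $\|X^n_{t_{k+1}}\|_{L^2}$ to be under control, but Assumption \ref{ass1} only bounds $\lambda^i$ and $\gamma^i$ from \emph{above} (both may vanish), so no running-cost term controls the intermediate positions; and ``controlling the endpoint values of $X^n$ via the dynamics'' requires exactly the $L^2$ bound on $\beta^n$ that you are trying to establish. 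Writing it out, the inequality you obtain on $[0,\delta]$ has the unknown $\|\beta^n\|$ appearing on both sides with a larger coefficient on the right, so nothing closes. As written, the coercivity step fails for $T\nu(\cE)\geq 1$.

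The paper avoids this entirely by a pointwise truncation: it argues that passive orders exceeding the current position cannot be optimal (the optimal state process is nonincreasing and stays in $[0,x]$), so one may assume w.l.o.g.\ that the minimizing sequence satisfies $|\beta^n_s(e)|\leq x$. Since $\nu(\cE)<\infty$, this gives $\E\bigl[\int_0^T\int_{\cE}|\beta^n_t(e)|^2\,\nu(\de)\dt\bigr]\leq x^2\nu(\cE)T$ immediately, with no smallness condition on $T$. To repair your proof you would need an analogous a priori reduction of the minimizing sequence (e.g.\ showing that replacing $\beta^n$ by a suitable truncation does not increase the cost), rather than trying to extract the bound from the terminal penalty alone.
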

\begin{proof}
%Since $\gamma^\alpha$ may take the value of $0$, we rewrite the cost functional \eqref{cost-L} into a coercive functional of $(\xi,\beta)$.	By It\^o's formula, it holds that
%	\begin{equation*}
%		\begin{split}
%			X^2_T=&~x^2+\int_0^T 2X_{s- }\dd X_s+\sum_{0\leq s\leq T}( X^2_s - X^2_{s-} - 2X_{s-}\Delta X_s )\\
%			=&~x^2-\int_0^T2X_s\xi_s\ds-\int_0^T 2X_{s-}\int_{\mathcal E}\beta_s(e)N(\ds,\de)+\sum_{0\leq s\leq T}(\Delta X_s)^2\\
%			=&~ -\int_0^T 2X_{s-}\int_{\mathcal E}\beta_s(e)N(\ds,\de)+\int_0^T\int_{\mathcal E}\beta^2_s(e)N(\ds,\de),
%		\end{split}
%	\end{equation*}
%which implies that
%\begin{equation*}
%	J^L(\xi,\beta;x,i_0)=\mathbb E\left[ \int_0^T \left(\eta_t^{\alpha_{t-}}|\xi_t|^2+\lambda_t^{\alpha_{t-}}|X_t|^2
%	+\int_{\cE}(\gamma_t^{\alpha_{t-}}(e)+1)|\beta_t(e)|^2\,\nu(\de) - 2X_t\int_{\mathcal E} \beta_t(e)\,\nu(\de) \right)\dt\right].
%\end{equation*}
By \cite[Lemma 5.2]{GHS-2018} (see also \cite[Lemma 3]{KP-2016}), the optimal state process (if exists) has non increasing sample path. Thus, $0\leq X_s\leq x$ and $Y^{L,i}_s\geq 0$, which implies that $Y^{L,i}_{s-}+\Psi^{L,i}_s(e)\geq 0$ by \cite[The proof of Theorem 3.1]{Hu2023}. By \eqref{control-L}, any $\beta$ with $\textrm{Leb}\otimes\nu\left\{(s,e):|\beta_s(e)|>x\right\}>0$ cannot be optimal.

Moreover, the value function $V^L_t(x,i)$ is bounded from both below and above; indeed, $0$ is a lower bound and $(0,0)$ is an admissible strategy leading to a finite cost as an upper bound. Let $(\xi^n,\beta^n)_n$ be a minimizing sequence in $L^2_\mathbb{F}(\mathbb{R})\times L^{2}_{\mathcal{P}^{\mathbb F}}(\mathbb R)$, i.e. $\lim_{n\rightarrow\infty} J^L_t(\xi^n,\beta^n;x,i)=V^L_t(x,i)$. W.l.o.g., we can assume $|\beta^n_s(e)|\leq x$ for each $n$. Then it holds that
\[
	\sup_n\mathbb E\left[ \int_0^T\eta_\star (\xi^n_t)^2\dt+\int_0^T\int_{\mathcal E}(\beta^n_t(e))^2\,\nu(\de)\dt \right]<\infty,
\]
which yields a weakly convergent subsequence in $L^2_\mathbb{F}(\mathbb{R})\times L^{2}_{\mathcal{P}^{\mathbb F}}(\mathbb R)$, that is, $(\xi^{n_k},\beta^{n_k}) \rightharpoonup (\xi^*,\beta^*)$ in $L^2_\mathbb{F}(\mathbb{R})\times L^{2}_{\mathcal{P}^{\mathbb F}}(\mathbb R)$.

It can be verified that the mapping $(\xi,\beta)\mapsto J^L_t(\xi,\beta;x,i)$ is convex and strongly continuous in $L^2_{\mathbb F}(\mathbb R)\times L^{2}_{\mathcal{P}^{\mathbb F}}(\mathbb{R})$. Thus, it is weakly lower semicontinuous. It implies that
\[
	V^L_t(x,i)\leq J^L_t(\xi^*,\beta^*;x,i)\leq	\liminf_{k\rightarrow\infty}J^L_t(\xi^{n_k},\beta^{n_k};x,i) = V^L_t(x,i),
\]
and thus, $(\xi^*,\beta^*)$ is an optimal control.
\end{proof}

\begin{corollary}\label{coro:BSDE-L}
The system of BSDEs \eqref{BSDE-L} has a solution $(Y^{L,i},Z^{L,i},\Psi^{L,i})_{i\in\mathcal M}\in S^2_{\mathbb G}(\mathbb R^\ell)\times L^2_{\mathbb G}(\R^{n\ell})\times L^{2}_{\mathcal P^{\mathbb G}}(\R^\ell)$. Moreover, the $Y$-component is nonnegative.
\end{corollary}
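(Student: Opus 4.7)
The plan is to deduce existence of the BSDE solution from the existence of an optimal control established in the previous lemma, by inverting the one-to-one correspondence recorded in \eqref{control-L}--\eqref{value-L}. The linear-quadratic structure of \eqref{cost-L}--\eqref{state-L} forces $V^L_t(x,i) = Y^{L,i}_t x^2$ for some nonnegative $\mathbb G$-adapted processes $Y^{L,i}$, and these $Y^{L,i}$ will be shown to solve \eqref{BSDE-L} by the martingale optimality principle.

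First I would set $Y^{L,i}_t := V^L_t(1,i)$. Scaling the initial state by $c$ rescales every admissible triple $(X,\xi,\beta)$ by $c$ and the cost by $c^2$, so indeed $V^L_t(x,i) = Y^{L,i}_t x^2$. Nonnegativity of $Y^{L,i}$ is immediate from the nonnegativity of the cost integrand and of the terminal penalty. Using the trivial admissible control $(\xi,\beta) \equiv (0,0)$, which keeps $X_s \equiv 1$, as a benchmark yields the uniform upper bound $0 \le Y^{L,i}_t \le L + \lambda^\star T$, placing $Y^{L,i}$ in $S^\infty_{\mathbb G}\subseteq S^2_{\mathbb G}$.

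Next I would extract $Z^{L,i}$ and $\Psi^{L,i}$. Letting $(\xi^*,\beta^*,X^*)$ denote the optimal data from the lemma, the martingale optimality principle implies that
\[
M_s := Y^{L,\alpha_s}_s (X^*_s)^2 + \int_0^s \Bigl( \eta_r^{\alpha_{r-}}(\xi^*_r)^2 + \lambda_r^{\alpha_{r-}}(X^*_r)^2 + \int_{\cE} \gamma_r^{\alpha_{r-}}(e) |\beta^*_r(e)|^2 \nu(\de) \Bigr) \dr
\]
is an $\mathbb F$-martingale on $[0,T]$. Martingale representation with respect to $W$, $\widetilde N$, and the compensated Markov-chain martingale produces $Z^{L,i} \in L^2_{\mathbb G}$ and $\Psi^{L,i} \in L^2_{\mathcal P^{\mathbb G}}$, while the compensator of the regime jumps of $s \mapsto Y^{L,\alpha_s}_s$ contributes the coupling drift $\sum_{j=1}^\ell q^{ij} Y^{L,j}$. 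Identifying $(\hat\xi^L,\hat\beta^L)$ in \eqref{control-L} as the pointwise minimizers of the associated Hamiltonian and forcing the drift of $M$ to vanish at the optimum gives exactly the driver of \eqref{BSDE-L}; the terminal condition $Y^{L,i}_T = L = V^L_T(1,i)$ is tautological.

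The main obstacle is the careful bookkeeping of the coupled generator term $\sum_j q^{ij} Y^{L,j}$: one must first verify that the entire family $(Y^{L,i})_{i\in\mathcal M}$ is $\mathbb G$-adapted on a single probability space (so that the coupling can arise from the Markov-chain compensator at all) and then check that the compensator contributes $\sum_j q^{\alpha_{s-},j}(Y^{L,j}_s - Y^{L,\alpha_{s-}}_s)\dt$ to the drift of $Y^{L,\alpha_s}_s$; matching drifts regime by regime then produces the system \eqref{BSDE-L}. Once this is in place, the $L^2$ bounds on $(Z^{L,i},\Psi^{L,i})$ follow by applying It\^o's formula to $|Y^{L,i}|^2$ and exploiting the uniform bound from the first step together with Assumption \ref{ass1}.
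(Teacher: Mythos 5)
Your proposal is correct and follows essentially the same route as the paper: both deduce existence for \eqref{BSDE-L} from the solvability of the penalized control problem established in the preceding lemma, reading off $(Y^{L,i},Z^{L,i},\Psi^{L,i})$ from the quadratic value function \eqref{value-L}. The only cosmetic difference is that you justify the correspondence via the martingale optimality principle and martingale representation, whereas the paper invokes the necessary and sufficient stochastic maximum principle (and defers the detailed verification to \cite{SX2024}); the substance, and the reliance on the preceding lemma, is the same.
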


\begin{remark}
	Note that the recent work \cite{SX2024} studies the system \eqref{BSDE-L} in detail as a special case (see \cite[Theorem 3.1 and Theorem 3.2]{SX2024}), while ours is a pure existence result that is sufficient for our purpose.
\end{remark}

\section{The existence result of the BSDE system}\label{sec:BSDE-existence}
From Section \ref{sec:unconstrained}, the system of BSDEs we are interested in is
\begin{equation}\label{BSDE}
	\left\{\begin{split}
	-\dd Y_t^i=&~ \left( \lambda_t^i-\frac{(Y_t^i)^2}{\eta_t^i}-\int_{\cE}\frac{(Y^i_{t}+\Psi_t^i(e))^2}{\gamma_t^i(e)+Y_{t}^i+\Psi_t^i(e)} \mathbf{1}_{ \{Y^i_{t}+\Psi^i_t(e)>0\} } 	\,\nu(\de)
	+\sum_{j=1}^{\ell}q^{ij}Y_t^j\right)\dt\\
	&~ - (Z_t^i)^{\top}\dw_t
	 - \int_{\cE}\Psi_t^i(e)\widetilde N(\dt,\de),\\
	 	\lim_{t\nearrow T}Y_t^i=&~+\infty, \quad i\in\cM,
	\end{split}\right.
\end{equation}
%possessing the constraint $Y^i_t+\Psi^i_t( e)\geq \delta>0$
where the singular terminal condition is because of the liquidation constraint $X_T=0$. Due to the singularity, the solution is understood in the following sense.
%\begin{align}\label{terminalcond}
%	\lim_{t\nearrow T}Y_t^i=\infty, \quad i\in\cM.
%\end{align}
\begin{definition}\label{def:solution}
	A system of vector processes $\left(Y^i,Z^i,\Psi^i\right)_{i\in\cM}$ is called a solution to the system of BSDEs \eqref{BSDE}, if it satisfies
	\begin{itemize}
		\item for all $0\leq s\leq t<T$ and $i\in\cM$:
		\begin{equation}
			\label{Riccatilocal}
			\begin{split}
			Y_s^i=&~Y_t^i+\int_s^t f^i(r,Y^i_r, Y^{-i}_r,\Psi^i_r)\dr %\left(\gamma_r^i-\frac{(Y_r^i)^2}{\eta_r^i}
		%	-\int_{\cE}\frac{(Y_r^i+\Psi_r^i(e))^2}{\lambda_r^i(e)+Y_r^i+\Psi_r^i(e)}\nu(\de)
		%	+\sum_{j=1}^{\ell}q^{ij}_rY_r^j\right)\dr \\
			 -\int_s^t (Z_r^i)^{\top}\dw_r-\int_{s}^t\int_{\cE}\Psi_r^i(e)\widetilde N(\dr,\de),
			\end{split}
	\end{equation}
where the driver is defined as
\begin{align}\label{def:f}
	f^i(t,y^i,y^{-i},\psi)=\lambda^i_t-\frac{(y^i)^2}{\eta^i_t}-\int_{\cE}\frac{(y^i+\psi(e))^2}{\gamma^i_t(e)+y^i+\psi(e)} 		\mathbf{1}_{ \{y^i+\psi(e)>0\}		 }						 \,\nu(\de)
	+\sum_{j=1}^{\ell}q^{ij}y^j,
\end{align}
	 for any vector $y\in\R^\ell$, $y^i$ stands for the $i$-th component of $y$, and $y^{-i}=(y^{1}, \ldots, y^{i-1}, y^{i+1}, \ldots, y^{\ell})$.
		\item for each $0\leq t<T$ and $i\in\cM$, the triplet $(Y^i,Z^i,\Psi^i)\in S^{2}_{\mathbb G}(0,t;\R)\times L^2_{\mathbb G}(0,t;\R^n)\times L^{2}_{\mathcal P^{\mathbb G}}(0,t;\R)$;
		\item for each $i\in\cM, \ \lim_{t\nearrow T}Y_t^i=+\infty$ a.s..
	\end{itemize}
%\textcolor{blue}{A solution $\left(Y^i,Z^i,\Psi^i\right)_{i\in\cM}$ of \eqref{BSDE} is called nonnegative if $Y^i_t\geq0, \dd\mathbb{P}\otimes\dt$-a.e., and called uniformly positive if $Y^i_t\geq\delta, \dd\mathbb{P}\otimes\dt$-a.e., for some constant $\delta>0$ and all $i\in\cM$. }
\end{definition}
%\textcolor{blue}{
%\begin{remark}
%Similar argument as in \cite[Theorem 3.1]{Hu2023}, we have $Y^i_t+\Psi^i_t(e)\geq 0$ (resp. $\geq\delta$), $d\mathbb{P}\otimes\dt\otimes\dd\nu$-a.e.
%for any nonnegative (resp. uniformly positive) solution $(Y^i,Z^i,\Psi^i)_{i\in\cM}$ of \eqref{BSDE}.
%\end{remark}
%}

\begin{theorem}\label{thm:existence}
	Under Assumption \ref{ass1},
	the system of BSDEs \eqref{BSDE} admits one solution $(Y^i,Z^i,\Psi^i)_{i\in\mathcal M}$ in the sense of Definition \ref{def:solution}. Moreover, the $Y$-component of this solution is strictly positive and $\Psi^i\in L^\infty_{\mathcal P^{\mathbb G}}(0,t;\mathbb R)$ for each $t<T$ and $i\in\mathcal M$.
\end{theorem}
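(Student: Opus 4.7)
The plan is to approximate the singular terminal condition by the finite ones $Y_T^i = L$ and pass to the limit $L \to \infty$. For each $L > 0$, Corollary \ref{coro:BSDE-L} supplies a non-negative solution $(Y^{L,i}, Z^{L,i}, \Psi^{L,i})_{i\in\cM}$ to the truncated system \eqref{BSDE-L}. My first step is to use the tailored multidimensional comparison theorem alluded to in the introduction to establish that $L \mapsto Y^{L,i}_t$ is non-decreasing, so the pointwise limit $Y^i_t := \lim_{L\to\infty} Y^{L,i}_t$ exists in $[0,+\infty]$. The crux is then to sandwich $Y^{L,i}$ between two scalar ODE solutions, uniformly in $L$ and $i$, on every interval $[0,t_0]$ with $t_0 < T$.

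For the upper bound I would consider the scalar Riccati ODE
\[
  -\dot{\bar u}(t) = \lambda^\star - \frac{\bar u(t)^2}{\eta^\star}, \qquad \lim_{t \nearrow T} \bar u(t) = +\infty,
\]
whose solution is finite on $[0,T)$ and behaves like $\eta^\star/(T-t)$ near $T$. Testing the constant vector $(\bar u, \ldots, \bar u)$ against each driver $f^i$: the row-sum identity $\sum_j q^{ij}\bar u = 0$ kills the regime-switching term; the nonlocal Poisson integral is non-positive; and Assumption \ref{ass1} yields $\lambda^i_t - \bar u^2/\eta^i_t \leq \lambda^\star - \bar u^2/\eta^\star$. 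The comparison theorem then delivers $0 \leq Y^{L,i}_t \leq \bar u(t)$ uniformly in $L$ and $i$. Combined with the sign constraint $Y^{L,i}_{s-} + \Psi^{L,i}(e) \geq 0$ already used in the unconstrained problem, this $L^\infty$ bound gives the uniform jump bound $|\Psi^{L,i}(e)| \leq 2\sup_{[0,t_0]} \bar u$ on $[0,t_0]$.

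For the lower bound I would use a scalar Riccati-type ODE $\underline v^L$ on $[0,T]$ with $\underline v^L(T) = L$ and driver of the form $-\underline v^2/\eta_\star - C(1+\underline v)$, where $C$ is chosen large enough to absorb the Poisson contribution $\int_{\cE} \frac{(\underline v + \psi)^2}{\gamma + \underline v + \psi}\mathbf{1}_{\{\underline v+\psi>0\}}\,\nu(\de)$ via the elementary estimate $x^2/(\gamma + x) \leq x$ for $x>0$ together with the uniform $L^\infty$ bound on $\Psi^{L,i}$ just obtained. Since $f^i$ is non-decreasing in $y^j$ for $j \neq i$ (because $q^{ij} \geq 0$), the tailored comparison theorem again applies and yields $Y^{L,i}_t \geq \underline v^L(t)$. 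The family $\underline v^L$ is monotone increasing in $L$ and its limit $\underline v^\infty$ behaves like $\eta_\star/(T-t)$ near $T$, so $\underline v^\infty > 0$ on $[0,T)$ and $\underline v^\infty(t) \to +\infty$ as $t \nearrow T$.

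With both bounds in place, monotone convergence gives $Y^i_t \in [\underline v^\infty(t), \bar u(t)]$ on $[0,T)$, simultaneously delivering strict positivity and the singular terminal condition. Standard BSDE a priori estimates on each $[0, t_0]$ turn the $L^\infty$ control on $Y^{L,i}$ into uniform $L^2$ control on $(Z^{L,i}, \Psi^{L,i})$; extracting weakly convergent subsequences and applying dominated convergence (justified by the local uniform bounds) handles the quadratic and nonlocal non-linearities, so the limit $(Y^i, Z^i, \Psi^i)$ satisfies \eqref{Riccatilocal} on each $[0,t_0]$. Essential boundedness $\Psi^i \in L^\infty_{\mathcal P^{\mathbb G}}(0,t;\mathbb R)$ for $t < T$ is inherited from the uniform local bound on $\Psi^{L,i}$. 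The main obstacle I foresee is proving the tailored multidimensional comparison theorem in the presence of the nonlocal Poisson integral and the off-diagonal coupling $\sum_{j\neq i} q^{ij} y^j$; the key structural trick is that comparing against a constant-across-regimes vector exploits both $\sum_j q^{ij} = 0$ and the monotonicity of $f^i$ in $y^{-i}$.
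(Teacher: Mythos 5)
Your overall strategy (truncate at $L$, use a multidimensional comparison theorem to get monotonicity in $L$ and to sandwich $Y^{L,i}$ between two scalar ODE solutions, then pass to the limit) is exactly the paper's, and your upper bound is fine: it is essentially the paper's bound $\eta^\star/(T-t)+\lambda^\star(T-t)/3$, obtained there by Young's inequality and a linear ODE on $[0,T-\epsilon]$ rather than by the singular Riccati equation directly. However, your lower bound as written has a genuine gap. You build the ODE $\underline v^L$ on all of $[0,T]$ with $\underline v^L(T)=L$ and choose the constant $C$ in the driver $-\underline v^2/\eta_\star - C(1+\underline v)$ "large enough to absorb the Poisson contribution\dots together with the uniform $L^\infty$ bound on $\Psi^{L,i}$ just obtained." But the only $\Psi$-bound available is $|\Psi^{L,i}|\le 2\sup_{[0,t_0]}\bar u$ on $[0,t_0]$, which blows up as $t_0\nearrow T$; on the full interval $[0,T]$ the bound on $\Psi^{L,i}$ is only of order $L$. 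So no single $C$ works uniformly, and if you let $C=C_L\to\infty$ the solution of the resulting ODE decays like $Le^{-C_L(T-t)}$ and the lower bound degenerates, losing both strict positivity and the singular terminal condition. The paper avoids this entirely by structuring the comparison theorem (Proposition \ref{prop:comparison}) so that the one-sided driver inequality (Condition 5) is tested only at the ODE solution with $\psi=0$ — where $\int_{\cE}\frac{\underline v^2}{\gamma+\underline v}\,\nu(\de)\le\nu(\cE)\underline v$ needs no information on $\Psi^{L,i}$ — while the discrepancy between $\Psi^{L,i}$ and $0$ is handled inside the comparison proof by the Lipschitz-in-$\psi$ condition $f^i(s,y,\psi)-f^i(s,y,\psi')\le\int_{\cE}|\psi-\psi'|\,\nu(\de)$ together with the jump terms of the It\^o formula for $[(\Delta Y)^+]^2$. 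Your plan needs this structural choice; it is not a cosmetic detail.

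A second gap is in the passage to the limit: you propose to extract \emph{weakly} convergent subsequences of $(Z^{L,i},\Psi^{L,i})$ and then use dominated convergence on the nonlinearities. Weak $L^2$ convergence of $\Psi^{L,i}$ does not allow you to pass to the limit in the nonlocal term $\int_{\cE}\frac{(Y+\Psi(e))^2}{\gamma+Y+\Psi(e)}\mathbf{1}_{\{Y+\Psi(e)>0\}}\,\nu(\de)$, which is genuinely nonlinear in $\Psi$, and it gives no a.e.\ convergence to feed into dominated convergence. The paper instead proves that $(Z^{L,i},\Psi^{L,i})_L$ is Cauchy in $L^2$ on each $[0,t]$, $t<T$, the key ingredient being the representation of the nonlocal term as $\inf_{-2\le u\le 0}[(\gamma+Y+\Psi)u^2+2(Y+\Psi)u]$, which yields the Lipschitz estimate \eqref{estimate1} and hence strong convergence; only then does the limit equation follow. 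You should replace the weak-compactness step by such a strong-convergence (Cauchy) argument.
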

To prove Theorem \ref{thm:existence}, we truncate the singular terminal condition by $L>0$ and the resulting system of BSDEs with standard terminal conditions has a solution by Corollary \ref{coro:BSDE-L}. We then consider the limit as the penalization degree $L$ approaches infinity. To ensure the convergence is rigorous, we require careful estimates to establish both an upper and a lower bound for the truncated system of BSDEs \eqref{BSDE-L}.
To achieve this, we consider the following comparison principle for multi-dimensional BSDEs with monotone drivers. The proof is inspired by \cite[Theorem 2.1]{Hu2023} and is provided in the appendix for completeness. We emphasize that the following comparison theorem is essential not only for the existence result but also for the subsequent uniqueness result.
\begin{proposition}
	\label{prop:comparison}
	For each $i\in\mathcal M$, assume $(Y^i, Z^i,\Psi^i)$ and $(\overline Y^i, \overline Z^i,\overline\Psi^i)\in S^{2}_{\mathbb G}(\R)\times L^2_{\mathbb G}(\R^n)\times L^{2}_{\mathcal P^{\mathbb G}}(\R)$, and assume $(Y^i, Z^i,\Psi^i)_{i\in\mathcal M}$ and $(\overline Y^i, \overline Z^i,\overline\Psi^i)_{i\in\mathcal M}$
	 satisfy the following two systems of BSDEs
	\begin{align*}
		Y^{i}_{t}=\zeta^{i}+\int_t^T f^i(s, Y^{i}_{s}, Y^{-i}_{s}, \Psi^i_s)\ds-\int_t^T (Z^{i}_{s})^{\top}\dw_s-\int_{t}^T\Psi^i_s(e)\,\widetilde N(\ds,\de), \quad \mbox{ $i\in\cM$},
	\end{align*}
	respectively,
	\begin{align*}
		\overline Y^{i}_{t}=\overline\zeta^{i}+\int_t^T \overline f^i(s, \overline Y^{i}_{s}, \overline Y^{-i}_{s}, \overline\Psi^i)\ds-\int_t^T (\overline Z^{i}_{s})^{\top}\dw_{s}
		-\int_{t}^T\overline\Psi^i_s(e)\,\widetilde N(\ds,\de), \quad \mbox{ $i\in\cM$.}
	\end{align*}
%	where for any vector $y\in\R^\ell$, $y^i$ stands for the $i$-th component of $y$, and $y^{-i}=(y^{1}, \ldots, y^{i-1}, y^{i+1}, \ldots, y^{\ell})$.
	Moreover, assume that there exists a constant $c>0$ such that, for all $i\in\cM$,
	\begin{enumerate}
		\item $\zeta^{i}, \ \overline\zeta^{i}\in L^\infty_{\mathcal G_T}$, and $\zeta^{i}\leq\overline\zeta^{i}$;
		\item
		$f^i(s, y^i, y^{-i}, \psi )-f^i(s, y^i, \overline y^{-i}, \psi ) \leq c \sum_{j\neq i} (y^{j}-\overline y^{j} )^+ $,
		for any $0\leq s<T$, $y^i\in\R$, $y^{-i}, \overline y^{-i}\in\mathbb{R}^{\ell-1}$ and $\psi$ that is $\nu$ square integrable;
	%	\item $f^i(s, y, \psi)$ is nondecreasing in $y^{j}$, for every $ j\neq i \in\cM$, $y=(y^{i},y^{-i})\in\R^{\ell}$ and $\psi\in\mathbb R$;
		\item For any $0\leq s<T$,
		$
			 ( Y^i_s-\overline Y^i_s)^+\left(f^i(s, Y^i_s,Y^{-i}_s, \overline\Psi^i_s)-f^i(s, \overline Y^i_s, Y^{-i}_s, \overline\Psi^i_s)\right) \leq c \left[(Y^i_s-\overline Y^i_s)^+\right]^2;
		$
		%\item there exists constant $c>0$ such that
		%\begin{align*}
		%&f^i(s,Y^i, Y^{-i},Z^i,\overline\Psi^i)-f^i(s,Y^i, \overline Y^{-i},Z^i,\overline\Psi^i)\leq c\sum_{j\neq i}(Y^j-\overline Y^j)^+;
		%\end{align*}
		\item
	
	 For any $0\leq s<T$, $y\in\mathbb R^\ell$ and $\psi,\psi'$ that are $\nu$ square integrable,	$ f^i\left(s,y, \psi\right)-f^i\left(s,y,\psi'\right)
			\leq \int_{\cE} \left|\psi(e)-\psi'(e)\right| \,\nu(\de)
	$;
		\item For any $0\leq s<T$, $ f^i\left(s, \overline Y^{i}_{s}, \overline Y^{-i}_{s} ,\overline\Psi^i_s\right)\leq \overline f^i\left(s, \overline Y^{i}_{s}, \overline Y^{-i}_{s}, \overline\Psi^i_s\right)$.
	\end{enumerate}
	Then $Y^{i}_{t}\leq \overline Y^{i}_{t}$, for a.e. $t\in[0, T]$ and all $i\in\cM$.
\end{proposition}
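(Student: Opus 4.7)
The plan is to apply the It\^o--Meyer formula to the convex $C^1$ function $\phi(x):=(x^+)^2$ composed with $\widehat Y^i := Y^i-\overline Y^i$ for each $i\in\cM$, sum over $i$, and close by Gronwall. Set also $\widehat Z^i := Z^i-\overline Z^i$ and $\widehat\Psi^i:=\Psi^i-\overline\Psi^i$. Since $\phi'(x)=2x^+$ and $\phi''(x)=2\mathbf 1_{\{x>0\}}$, after a standard localization to make the $\dw$ and $\widetilde N$ integrals true martingales, and using $\E[\phi(\widehat Y^i_T)]=0$ by hypothesis~1, It\^o's formula gives
\begin{align*}
&\E[\phi(\widehat Y^i_t)] + \E\int_t^T \mathbf 1_{\{\widehat Y^i_s>0\}}|\widehat Z^i_s|^2\ds + \E\int_t^T\!\!\int_\cE J^i_s(e)\,\nu(\de)\ds\\
&\qquad = 2\,\E\int_t^T (\widehat Y^i_{s-})^+\big[f^i(s,Y^i,Y^{-i},\Psi^i)-\overline f^i(s,\overline Y^i,\overline Y^{-i},\overline\Psi^i)\big]\ds,
\end{align*}
where $J^i_s(e):=\phi(\widehat Y^i_{s-}+\widehat\Psi^i_s(e))-\phi(\widehat Y^i_{s-})-2(\widehat Y^i_{s-})^+\widehat\Psi^i_s(e)\ge 0$ by convexity of $\phi$.

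Next I decompose the driver difference along the chain
\begin{equation*}
f^i(s,Y^i,Y^{-i},\Psi^i)\to f^i(s,Y^i,Y^{-i},\overline\Psi^i)\to f^i(s,\overline Y^i,Y^{-i},\overline\Psi^i)\to f^i(s,\overline Y^i,\overline Y^{-i},\overline\Psi^i)\to \overline f^i,
\end{equation*}
so that, after multiplying by $2(\widehat Y^i_{s-})^+$, the four links are controlled by hypotheses~4, 3, 2, 5 respectively: hypothesis~3 yields $2c[(\widehat Y^i_s)^+]^2$; hypothesis~2 yields $2c(\widehat Y^i_s)^+\sum_{j\neq i}(\widehat Y^j_s)^+$, absorbed via $2ab\le a^2+b^2$; hypothesis~5 contributes a nonpositive term that is dropped. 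The three ``$y$-perturbations'' are routine; the delicate piece is the $\Psi$-perturbation, where hypothesis~4 provides only the $L^1(\nu)$-Lipschitz bound $2(\widehat Y^i_{s-})^+\int_\cE|\widehat\Psi^i_s(e)|\,\nu(\de)$, not the monotonicity-type condition that usually underpins comparison theorems for BSDEs with jumps.

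To close this gap I will invoke the elementary pointwise inequality, valid for all $a,b\in\R$,
\begin{equation*}
2a^+|b| \le (a^+)^2 + \big[\phi(a+b)-\phi(a)-2a^+ b\big],
\end{equation*}
checked by a case split on the signs of $a$ and $a+b$ (trivial when $a\le 0$; reducing to $2a|b|\le a^2+b^2$ when $a,a+b>0$; and in fact an equality when $a>0\ge a+b$). Applied with $a=\widehat Y^i_{s-}$, $b=\widehat\Psi^i_s(e)$ and integrated against $\nu$, it delivers
\begin{equation*}
2(\widehat Y^i_{s-})^+\int_\cE|\widehat\Psi^i_s(e)|\,\nu(\de) \le \nu(\cE)\,[(\widehat Y^i_{s-})^+]^2 + \int_\cE J^i_s(e)\,\nu(\de),
\end{equation*}
so the offending $\Psi$-contribution is absorbed into the positive term $\int_\cE J^i\,\nu(\de)\ds$ on the left of the It\^o identity, at the cost of a harmless extra quadratic summand (finite since $\nu(\cE)<\infty$ by the standing assumption). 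Assembling everything and summing over $i\in\cM$ gives
\begin{equation*}
\E\sum_{i\in\cM}[(\widehat Y^i_t)^+]^2 \le C\int_t^T \E\sum_{i\in\cM}[(\widehat Y^i_s)^+]^2\,\ds
\end{equation*}
for a constant $C=C(c,\ell,\nu(\cE))$; Gronwall's lemma then forces $(\widehat Y^i_t)^+\equiv 0$, i.e.\ $Y^i_t\le\overline Y^i_t$, as required. The main obstacle is precisely the $\Psi$-perturbation described above; once the elementary inequality is in hand, the remainder is routine modulo the localization needed to justify the vanishing of the martingale parts.
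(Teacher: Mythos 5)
Your proposal is correct and matches the paper's own proof in all essentials: It\^o's formula applied to $\bigl((Y^i-\overline Y^i)^+\bigr)^2$, the same four-link decomposition of the driver difference keyed to hypotheses 4, 3, 2, 5, and the same absorption of the $L^1(\nu)$ jump term into the convexity remainder plus a quadratic term (your explicit case-split inequality is precisely the bound the paper asserts more tersely). The only difference is in the closing step: you apply backward Gronwall to $t\mapsto\E\sum_{i\in\cM}\bigl[\bigl((Y^i_t-\overline Y^i_t)^+\bigr)^2\bigr]$, whereas the paper takes conditional expectations and runs an $\esssup$ contraction on subintervals of length $\tfrac{1}{2C\ell}$, iterating finitely many times.
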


\begin{remark}
	(1) Note that Conditions 2 and 4 are applied to any argument while
	Conditions 3 and 5 are only applied to solutions of the BSDE systems.
	
	(2) In Lemma \ref{lem:bound-Y}, the above comparison principle will be applied to the truncated system of BSDEs \eqref{BSDE-L} with $L>0$
	to get upper and lower bounds for the convergence purpose.
	Moreover, it holds, for any $t\in[0,T]$, any $y\in\mathbb R$ and any $\psi,\psi'$ that are $\nu$ square integrable, that
	\begin{multline}\label{Condition-4}
		-\int_{\mathcal E} \frac{(y+\psi(e))^2}{\gamma_t(e)+y+\psi(e)} \mathbf{1}_{ \{y+\psi(e)>0\} }\,\nu(\de) + \int_{\mathcal E} \frac{(y+\psi'(e))^2}{\gamma_t(e)+y+\psi'(e)} \mathbf{1}_{ \{y+\psi'(e)>0\} }\,\nu(\de)\\
		 \leq \int_{\mathcal E} | \psi(e) - \psi'(e) |\,\nu(\de).
	\end{multline}
	Thus, Condition 4 in Proposition \ref{prop:comparison} always holds.
\end{remark}

Recall that $(Y^{L,i},Z^{L,i},\Psi^{L,i})_{i\in\mathcal M}$ is a solution to the BSDE system \eqref{BSDE-L}. The next lemma establishes the upper and lower bounds for $Y^{L,i}$.
\begin{lemma}\label{lem:bound-Y}
	Under Assumption \ref{ass1}, for each $t\in[0,T)$ and $i\in\cM$, we have the following estimate for $Y^{L,i}_t$:
	\begin{align}\label{lowerupper}
		\frac{1}{(1+L^{-1})e^{\check c(T-t)}-1}\leq Y^{L,i}_t&\leq \frac{\eta^\star}{T-t}+\frac{\lambda^\star}{3}(T-t),
	\end{align}
where
$\check c$ is any positive constant satisfying $\max\{\eta_\star^{-1}, \nu(\cE)\}\leq \check c$ and the positive contants $\eta_\star$, $\eta^\star$ and $\lambda^\star$ are given in Assumption \ref{ass1}.

Moreover, it holds that for each $i\in\mathcal M$
\begin{equation}\label{estimate:Y+Psi}
		Y^{L,i}_{t-}+\Psi^{L,i}_t(e)\geq \frac{1}{(1+L^{-1})e^{\check c(T-t)}-1}, \quad \dd\mathbb{P}\otimes\dt\otimes\dd\nu-a.e.
\end{equation}
\end{lemma}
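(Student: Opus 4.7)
The plan is to derive both pointwise estimates by applying the multidimensional comparison principle (Proposition \ref{prop:comparison}) against constant-in-$i$ candidates obtained from scalar ODEs, and then to deduce the jump estimate \eqref{estimate:Y+Psi} from the pathwise bound on $Y^{L,i}$.

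For the upper bound, let $\overline Y^L$ solve the Riccati ODE $-(\overline Y^L)'(t) = \lambda^\star - (\overline Y^L(t))^2/\eta^\star$ with $\overline Y^L(T) = L$, which is globally well-defined on $[0,T]$ (its explicit form is a shifted hyperbolic cotangent). Viewing $(\overline Y^L, 0, 0)$ as a constant-in-$i$ solution of the BSDE system with driver $\overline f^i(t, y^i, y^{-i}, \psi) := \lambda^\star - (y^i)^2/\eta^\star$, apply Proposition \ref{prop:comparison} with $Y = Y^{L,i}$ and $\overline Y = \overline Y^L$: Condition 5, $f^i(t, \overline Y^L, \overline Y^L, 0) \leq \overline f^i(t, \overline Y^L, \overline Y^L, 0)$, is immediate from $\lambda^i_t \leq \lambda^\star$, $\eta^i_t \leq \eta^\star$, the nonnegativity of the jump integrand, and $\sum_j q^{ij} \overline Y^L = 0$; Conditions 2--4 on $f^i$ follow directly from its structure (Condition 4 being exactly the Remark after Proposition \ref{prop:comparison}). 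This yields $Y^{L,i}_t \leq \overline Y^L(t)$. To replace $\overline Y^L$ by $\phi(t) := \eta^\star/(T-t) + \lambda^\star(T-t)/3$, compute $-\phi'(t) - \lambda^\star + \phi(t)^2/\eta^\star = (\lambda^\star)^2(T-t)^2/(9\eta^\star) \geq 0$, so $\phi$ is a supersolution of the same Riccati with $\phi(T^-) = +\infty \geq \overline Y^L(T)$; a linearized Gronwall argument on $v := \phi - \overline Y^L$ using $v'(t) \leq (\phi + \overline Y^L)(t)\, v(t)/\eta^\star$ then gives $\phi \geq \overline Y^L$ on $[0, T)$.

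For the lower bound, define $\underline Y(t) := 1/\bigl((1+L^{-1})e^{\check c(T-t)} - 1\bigr)$; a direct computation verifies $\underline Y(T) = L$ and $-\underline Y'(t) = -\check c \underline Y(t) - \check c \underline Y(t)^2$. The crucial choice is the driver
\begin{align*}
\underline f^i(t, y^i, y^{-i}, \psi) := -\check c y^i - \check c (y^i)^2 + \sum_{j=1}^\ell q^{ij} y^j - \int_{\cE} \psi^+(e)\, \nu(\de),
\end{align*}
for which $(\underline Y, 0, 0)$ (constant in $i$) is a BSDE-system solution. Conditions 2--4 on $\underline f^i$ are routine (Condition 4 uses $|\psi^+ - \psi'^+| \leq |\psi - \psi'|$). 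For Condition 5, at the upper solution $(Y^{L,i}, Y^{L,-i}, \Psi^{L,i})$ the Markov chain terms cancel, and using $Y^{L,i} \geq 0$ from Corollary \ref{coro:BSDE-L}, $\lambda^i_t \geq 0$ from Assumption \ref{ass1}, the pointwise inequality $(y+\psi)^2/(\gamma + y + \psi) \leq y + \psi$ on $\{y+\psi > 0\}$, and the positivity $Y^{L,i} + \Psi^{L,i} \geq 0$ (so that $(Y+\Psi)^+ = Y + \Psi^+ - \Psi^-$), the algebra collapses to
\begin{align*}
f^i - \underline f^i \geq (\check c - 1/\eta^i_t)(Y^{L,i})^2 + (\check c - \nu(\cE)) Y^{L,i} + \int_{\cE} (\Psi^{L,i})^-(e)\, \nu(\de) \geq 0
\end{align*}
by the hypothesis $\check c \geq \max\{1/\eta_\star, \nu(\cE)\}$. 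The comparison principle then yields $\underline Y(t) \leq Y^{L,i}_t$.

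For the jump estimate \eqref{estimate:Y+Psi}, the pathwise bound $Y^{L,i}_t \geq \underline Y(t)$ (for all $t \in [0, T]$, a.s.) and the continuity of $\underline Y$ give $Y^{L,i}_{t-} \geq \underline Y(t-)$; at each jump time $\tau$ of $N$ with mark $e_\tau$, $Y^{L,i}_{\tau-} + \Psi^{L,i}_\tau(e_\tau) = Y^{L,i}_\tau \geq \underline Y(\tau) = \underline Y(\tau-)$, and a standard predictable-section argument extends this to $\dd\bP \otimes \dt \otimes \nu(\de)$-a.e.\ $(t, \omega, e)$. The main obstacle is the design of $\underline f^i$ in the lower-bound step: the jump integral in $f^i$ combined with the sign constraint $Y + \Psi \geq 0$ forces the $-\int \psi^+\, \nu(\de)$ correction, after which Condition 5 becomes a purely algebraic nonnegativity statement.
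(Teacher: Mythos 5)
Your proposal is correct, and it reaches both bounds through the same central tool as the paper (Proposition \ref{prop:comparison}) but by a genuinely different route in each half. For the upper bound, the paper first linearizes the quadratic term via Young's inequality, $\lambda^i_t - y^2/\eta^i_t \le \lambda^\star - 2y/(T-t) + \eta^\star/(T-t)^2$, compares against the explicitly solvable \emph{linear} ODE system \eqref{Riccatiupper} on $[0,T-\epsilon]$ with terminal value a uniform bound $\widetilde c$ on $Y^{L,i}$, and then sends $\epsilon\downarrow 0$; you instead compare directly against the scalar Riccati ODE with terminal value $L$ at $T$ (avoiding the $\epsilon$-truncation and the need for the a priori bound $\widetilde c$, since the Riccati driver is nonsingular), at the cost of an extra deterministic step showing that $\phi(t)=\eta^\star/(T-t)+\lambda^\star(T-t)/3$ is a Riccati supersolution dominating the exact solution --- your computation $-\phi'-\lambda^\star+\phi^2/\eta^\star=(\lambda^\star)^2(T-t)^2/(9\eta^\star)\ge 0$ and the Gronwall argument on $v=\phi-\overline Y^L$ are both correct. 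For the lower bound, the paper negates the processes ($\widetilde Y^i=-Y^{L,i}$) so that the desired inequality becomes an upper bound for the transformed drivers $h^i,\overline h^i$; you instead swap the roles of $Y$ and $\overline Y$ in Proposition \ref{prop:comparison} and design the auxiliary driver $\underline f^i$ with the extra term $-\int_{\cE}\psi^+(e)\,\nu(\de)$ so that Condition 4 survives and Condition 5 reduces to the algebraic inequality you display; this is cleaner than the sign-flip and your verification is sound (note that only $Y^{L,i}\ge 0$ is really needed there, since $(Y+\psi)^+\le Y+\psi^+$; the stronger fact $Y^{L,i}+\Psi^{L,i}\ge 0$, which you invoke, is indeed available from the proof of Lemma 2.1 / Corollary \ref{coro:BSDE-L}, as in the paper). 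For \eqref{estimate:Y+Psi} the paper simply cites \cite{Hu2023}; your jump-time identity $Y^{L,i}_{\tau-}+\Psi^{L,i}_\tau(e_\tau)=Y^{L,i}_\tau\ge \underline Y(\tau)$ is the right starting point, though the phrase ``predictable-section argument'' should be replaced by the standard compensator identity
\begin{equation*}
\E\left[\int_0^t\int_{\cE}\bigl(\Psi^{L,i}_s(e)-\underline Y(s)+Y^{L,i}_{s-}\bigr)^-\,\nu(\de)\ds\right]
=\E\left[\int_0^t\int_{\cE}\bigl(\Psi^{L,i}_s(e)-\underline Y(s)+Y^{L,i}_{s-}\bigr)^-\,N(\ds,\de)\right]=0,
\end{equation*}
which is exactly the argument of \cite[Theorem 3.1]{Hu2023} that the paper defers to. In summary: no gaps, a somewhat more self-contained and arguably tidier derivation, at the price of two auxiliary deterministic ODE comparisons that the paper's linearization avoids.
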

\begin{proof}
{\bf The upper bound for $Y^{L,i}$.}
From Young's inequality and Assumption \ref{ass1}, for any $t\in[0,T)$ and $i\in\mathcal M$ we have
\begin{align}\label{upper}
	\lambda_t^i-\frac{y^2}{\eta_t^i}\leq \lambda^\star-2\frac{y}{T-t}+\frac{\eta_t^i}{(T-t)^2}\leq \lambda^\star-2\frac{y}{T-t}+\frac{\eta^\star}{(T-t)^2}.
\end{align}
Let $\epsilon\in(0,T)$, and $\widetilde c>0$ be such that $\max_{i\in\cM}\esssup_{t,\omega} Y^{L,i}_{t}\leq \tilde c$, and consider the following system of linear ODEs on $[0,T-\epsilon]$:
\begin{equation}\label{Riccatiupper}
	\left\{\begin{split}
		\dd \overline Y_t^{L,i} =&~-\left(\lambda^\star-2\frac{\overline Y^{L,i}_t}{T-t}+\frac{\eta^\star}{(T-t)^2}+\sum_{j=1}^{\ell}q^{ij}\overline Y^{L,j}_t\right)\dt,\\
		\overline Y^{L,i}_{T-\epsilon}=&~\widetilde c, \quad i\in\cM,
	\end{split}\right.
\end{equation}
%\begin{align}
%\begin{cases}
%\dd \overline Y^{L,i}_t&=-[c_2-2\frac{\overline Y^{L,i}}{T-t}+\frac{c_2}{(T-t)^2}+\sum_{j=1}^{\ell}q^{ij}\overline Y^{L,j}]\dt+(\overline Z_t^{L,i})^{\top}\dw_t+\int_{\cE}\Psi^i(e)\tilde N(\dt,\de),\\
% \overline Y^{L,i}_{T-\varepsilon}&=\max_{i\in\cM}Y^{L,i}_{T-\varepsilon}, \ i\in\cM,
% \end{cases}
%\end{align}
which admits a unique solution $(\overline Y^{L,i})_{i\in\cM}$:
\begin{align*}
\overline Y^{L,i}_t=\frac{1}{\Gamma_t}\left(\Gamma_{T-\epsilon} \widetilde c
+ \int_t^{T-\epsilon}\Gamma_s\Big( \lambda^\star+\frac{\eta^\star}{(T-s)^2}\Big)\ds\right), \quad i\in\cM,
\end{align*}
where
$
\Gamma_t%=\exp{\Big(-\int_0^t\frac{2}{T-s}\ds\Big)}
=\Big(\frac{T-t}{T}\Big)^2.
$
Note that $\overline Y^{L,i}$ is independent of $i$.

We will verify that the BSDE systems \eqref{BSDE-L} and \eqref{Riccatiupper} satisfy the conditions of Proposition \ref{prop:comparison}.

Condition $1$ is clear. Condition $2$ holds because $q^{ij}\geq 0$ for $j\neq i$, and Condition $5$ is given by \eqref{upper}. For Condition $3$, we only need to consider the case $Y^i_t>\overline Y^i_t$, under which we have
\begin{equation*}
	\begin{split}
		&~(Y^i_t - \overline Y^i_t	)\left( -\frac{		(Y^i_t)^2	}{\eta^i_t}+\frac{ (\overline Y^i_t)^2				}{\eta^i_t}		- \int_{\mathcal E} \frac{ (Y^i_t)^2 }{\gamma^i_t(e)+Y^i_t}\nu(\de)	+ \int_{\mathcal E} \frac{ (\overline Y^i_t)^2 }{ \gamma^i_t(e) + \overline Y^i_t } \nu(\de)			+q^{ii}( Y^i_t-\overline Y^i_t )		 \right) \\
		\leq&~ -\left(\frac{ Y^i_t+\overline Y^i_t }{\eta^i_t} -q^{ii}\right) (Y^i_t-\overline Y^i_t)^2- \int_{\mathcal E} (Y^i_t - \overline Y^i_t	)\left( \frac{ (Y^i_t)^2 }{\gamma^i_t(e)+Y^i_t} 	- \frac{ (\overline Y^i_t)^2 }{ \gamma^i_t(e) + \overline Y^i_t } \right) \nu(\de)				 \\
		\leq&~0,
	\end{split}
\end{equation*}
where the last inequality is obtained by noticing that $\frac{Y^i_t+\overline Y^i_t}{\eta^i_t}-q^{ii}\geq 0$ and that the mapping $y\mapsto \frac{y^2}{\gamma^i_t(e)+y}$ is increasing. Moreover, Condition 4 holds due to \eqref{Condition-4}. %if $\Psi^i_t(e)\geq0$ since the mapping $\psi\mapsto \frac{(Y^i_t+\psi)^2}{\gamma^i_t(e)+Y^i_t+\psi}$ is increasing. If $\Psi^i_t(e)<0$, Condition $4$ also holds since $-\frac{ ( Y^i_t+\Psi^i_t(e) )^2 }{\gamma^i_t(e)+Y^i_t+\Psi^i_t(e)} + \frac{( Y^i_t)^2}{\gamma^i_t(e)+Y^i_t}<|\Psi^i_t(e)|$ by straightforward calculations.

Thus, by applying Proposition \ref{prop:comparison} to the BSDE systems \eqref{BSDE-L} and \eqref{Riccatiupper}, we get, for each $t\in[0,T-\epsilon]$,
\begin{align*}
Y^{L,i}_t\leq \overline Y^{L,i}_t=\frac{1}{(T-t)^2}\Big[\epsilon^2 \widetilde c + \int_t^{T-\epsilon}\left(\eta^\star+\lambda^\star(T-s)^2\right)\ds\Big].
\end{align*}
Sending $\epsilon\downarrow 0$, we get, for each $t\in[0,T)$,
\begin{align*}
Y^{L,i}_t&\leq \frac{ 1}{(T-t)^2}\int_t^{T}\left(\eta^\star+\lambda^\star(T-s)^2\right)\ds=\frac{\eta^\star}{T-t}+\frac{\lambda^\star}{3}(T-t).
\end{align*}
{\bf The lower bound for $Y^{L,i}$.} Let $\check c$ be a positive constant such that $\max\{\eta_\star^{-1}, \nu(\cE)\}\leq \check c$.

Noting that $\lambda^i\geq 0$, and that
\begin{equation}\label{solution:lower}
	\underline Y^{L,i}_t=\frac{1}{(1+L^{-1})e^{\check c(T-t)}-1} > 0
\end{equation}
is the unique solution to the following system of linear ODEs
\begin{equation}\label{Riccatilower}
	\left\{\begin{split}
		\dd \underline Y^{L,i}_t=&~-\bigg(-\check c(\underline Y^{L,i}_t)^2-\check c\underline Y^{L,i}_t+\sum_{j=1}^{\ell}q^{ij}\underline Y^{L,j}_t\bigg)\dt,\\
		\underline Y^{L,i}_T=&~L, 		~ i\in\cM.
	\end{split}\right.
\end{equation}
To apply Proposition \ref{prop:comparison}, we will compare $\widetilde Y^i:=-Y^{L,i}$ and $\widetilde {\underline Y}^i:=-\underline Y^{L,i}$, which satisfy the BSDE and ODE with drivers $h^i(t,y,z,\psi)=-\lambda^i_t+\frac{(y^i)^2}{\eta^i_t}+\int_{\mathcal E} \frac{( y^i+\psi(e)	 )^2}{\gamma^i_t(e)-y^i-\psi(e)} \mathbf{1}_{ \{ y^i+\psi(e)<0 \} } \nu(\de) + \sum_{j=1}^\ell q^{ij} y^j$, respectively, $\overline h^i(t,y,z,\psi)=\check c (y^i)^2-\check c y^i + \sum_{j=1}^\ell q^{ij} y^j$.
%\begin{equation*}
%	\left\{\begin{split}
%		-d\widetilde Y^i_t=&~ \left(-\gamma^i_t+\frac{(\widetilde Y^i_t)^2}{\eta^i_t} + \int_{\mathcal E} \frac{ ( \widetilde Y^i_t+\widetilde\Psi^i_t(e) )^2 }{\lambda^i_t(e)-\widetilde Y^i_t-\widetilde\Psi^i_t(e)}\nu(\de) + \sum_{j=1}^\ell q^{ij}\widetilde Y^j_t \right) \dt - (\widetilde Z^i_t)^\top\dw_t - \int_{\mathcal E} \widetilde\Psi^i_t(e)\widetilde N(\dt,\de)\\
%		-d\widetilde{\underline Y}^i_t=&~ \left( c_3(\widetilde{\underline Y}^i_t)^2 - c_3\widetilde{\underline Y}^i_t +\sum_{j=1}^\ell q^{ij}\widetilde{\underline Y}^j_t \right)\dt \\
%		\widetilde Y^i_T=&~\widetilde{\underline Y}^i_T=-L.
%	\end{split}\right.
%\end{equation*}
It can be verified directly that Conditions $1$-$2$ in Proposition \ref{prop:comparison} are satisfied. To verify Condition $3$, it is sufficient to consider the case $\widetilde Y^i>\widetilde {\underline Y}^i$:
\begin{equation*}
	\begin{split}
		&~(\widetilde Y^i_t-\widetilde {\underline Y}^i_t ) \left(		\frac{ (\widetilde Y^i_t)^2 - (\widetilde{\underline Y}^i_t)^2 }{\eta^i_t}	+\int_{\mathcal E} 		\frac{ ( \widetilde Y^i_t )^2 }{\gamma^i_t(e)-\widetilde Y^i_t }	\,\nu(\de)-\int_{\mathcal E} 	\frac{(	\widetilde{\underline Y}^i_t )^2}{\gamma^i_t(e) - \widetilde{\underline Y}^i_t }			 \,\nu(\de)			+ q^{ii}( \widetilde Y^i_t-\widetilde{\underline Y}^i_t ) 	\right) \\
		\leq&~ \frac{\widetilde Y^i_t+\widetilde{\underline Y}^i_t}{\eta^i_t}(\widetilde Y^i_t-\widetilde{\underline Y}^i_t)^2 + q^{ii}(	 \widetilde Y^i_t-\widetilde {\underline Y}^i_t 	)^2 +(\widetilde Y^i_t-\widetilde {\underline Y}^i_t ) \int_{\mathcal E} \left(	\frac{ ( \widetilde Y^i_t )^2 }{\gamma^i_t(e)-\widetilde Y^i_t } - \frac{(	\widetilde{\underline Y}^i_t )^2}{\gamma^i_t(e) - \widetilde{\underline Y}^i_t }		\right)			 \nu(\de)\\
		\leq&~ -\frac{1}{\eta^\star}\left(\delta+\frac{1}{ (1+L^{-1})e^{\check cT} -1 }\right)(\widetilde Y^i_t-\widetilde{\underline Y}^i_t)^2,
	\end{split}
\end{equation*}
where the last inequality is due to $q^{ii}\leq0$, the fact that the mapping $x\mapsto \frac{x^2}{\gamma^i_t(e)-x}$ is decreasing on $(-\infty,0)$ for each given $(t,e)$, and the fact $0>\widetilde Y^i_t >\widetilde{\underline Y}^i_t $. %To verify Condition $4$, it is sufficient to verify that
%\begin{equation}\label{lower-bound-5}
%	\begin{split}
%		\frac{(\widetilde Y^i_t+\widetilde\Psi^i_t(e))^2}{\gamma^i_t(e)-\widetilde Y^i_t-\widetilde\Psi^i_t(e)} - \frac{ ( \widetilde Y^i_t )^2 }{\gamma^i_t(e)-\widetilde Y^i_t}\leq |\widetilde\Psi^i_t(e)|.
%	\end{split}
%\end{equation}
%Indeed, if $\widetilde\Psi^i_t(e)>0$, then \eqref{lower-bound-5} holds because the mapping $x\mapsto \frac{x^2}{\gamma^i_t(e)-x}$ is decreasing on $(-\infty,0)$, $\widetilde Y^i_t+\widetilde\Psi^i_t(e)<0$ and $\widetilde Y^i_t<0$. If $\widetilde\Psi^i_t(e)\leq 0$, \eqref{lower-bound-5} holds by straightforward calculation and noting $\widetilde Y^i_t<0$.
Condition 4 holds by \eqref{Condition-4} and Condition $5$ holds by using the assumption for $\check c$.

Thus, Proposition \ref{prop:comparison} implies that %applies to BSDEs \eqref{BSDE-L} and \eqref{Riccatilower} to get
\begin{align*}
	Y^{L,i}_t\geq \underline Y^{L,i}_t=\frac{1}{(1+L^{-1})e^{\check c(T-t)}-1}.
\end{align*}
The estimate \eqref{estimate:Y+Psi} can be obtained by a similar argument as in the proof of \cite[Theorem 3.1]{Hu2023}.
\end{proof}
%\begin{figure}[h]\label{figure3}
%\begin{minipage}[c]{0.5\textwidth}
%\centering
%\includegraphics[height=4.5cm,width=7.5cm]{3positions.eps}
%\end{minipage}
%\begin{minipage}[c]{0.5\textwidth}
%\centering
%\includegraphics[height=4.5cm,width=7.5cm]{costleader4.eps}
%\end{minipage}
%\caption{bla.}
%\end{figure}
Now we are ready to prove Theorem \ref{thm:existence}.
\begin{proof}[Proof of Theorem \ref{thm:existence}.]
	{\bf The convergence of $Y^{L,i}$.} By Proposition \ref{prop:comparison} we know that $Y^{L,i}\leq Y^{N,i}$ for $L\leq N$. By \eqref{lowerupper}, for fixed $t<T$, the family of random variables $\{Y^{L,i}_t\}_{L\geq0,i\in\cM}$ is bounded from above.
%\begin{align}
%	Y^{L,i}_t&\leq \frac{c_2}{T-t}+\frac{c_2}{3}(T-t).
%\end{align}
Hence, for any $t<T$ and $i\in\cM$, we can define $Y^{i}_t$ as the increasing limit of $Y^{L,i}_t$ as $L\rightarrow\infty$. The limit $Y^i_t$ inherits the upper bound of $Y^{L,i}_t$ in \eqref{lowerupper}, i.e.,
$
	Y^{i}_t\leq \frac{\eta^\star}{T-t}+\frac{\lambda^\star}{3}(T-t).$
By dominated convergence, we have
\begin{equation}\label{convergence-Y-1}
	\lim_{ L\rightarrow\infty} \left( \mathbb E\left[| Y^{L,i}_t - Y^i_t |^p\right]+\mathbb E\left[\int_0^t|Y^{L,i}_r-Y^i_r|^p\dr\right] \right) = 0,
\end{equation}
%$Y^{L,i}_t$ converges to $Y^{i}_t$ in $L^p_{\mathcal{F}^{W,N}_t}(\Omega)$ and $L^p_{\mathbb{F}^{W,N}}(0,t;\R)$
for each $i\in\mathcal M$, $t<T$ and $p\geq 1$.

Sending $L$ to infinity in the lower bound in \eqref{lowerupper} yields
$
	Y^{i}_t\geq\frac{1}{e^{\check c(T-t)}-1},
$
which implies that $Y^i$ satisfies the singular terminal condition in \eqref{BSDE}.

{\bf The convergence of $(Z^{L,i},\Psi^{L,i})$.} Let $0\leq s\leq t<T$. For $N,L\geq0$, we put
\begin{align*}
	\Delta Y^i=Y^{N,i}-Y^{L,i}, \ \Delta Z^i=Z^{N,i}-Z^{L,i}, \ \Delta \Psi^i=\Psi^{N,i}-\Psi^{L,i}.
\end{align*}
Applying It\^o's formula to $|\Delta Y^i|^2$ yields,
\begin{equation}
	\label{itosquare}
	\begin{split}
	&~ |\Delta Y^i_s|^2+\int_s^t|\Delta Z^i_r|^2\dr+\int_s^t\int_{\cE}|\Delta\Psi^i_r(e)|^2N(\dr,\de) \\
	=&~ |\Delta Y^i_t|^2+2\int_s^t\Delta Y^i_r\Big(f^i(r,Y^{N,i}_r,Y^{N,-i}_r,\Psi^{N,i}_r)-f^i(r,Y^{L,i}_r,Y^{L,-i}_r,\Psi^{L,i}_r)\Big)\dr \\
	&~-2\int_s^t\Delta Y^i_r(\Delta Z^i_r)^{\top}\dw_r-2\int_s^t\int_{\cE}\Delta Y^i_r\Delta\Psi^i_r(e)\widetilde N(\dr,\de),
	\end{split}
\end{equation}
where we recall the driver $f^i$ is defined in \eqref{def:f}.
By noting that if $u>0$ or $u<-2<-2\frac{Y^{L,i}_r+\Psi^{L,i}_r(e)}{\gamma^i_r(e)+Y^{L,i}_r+\Psi^{L,i}_r(e)}$, we have
$$
	\left(\gamma^i_r(e)+Y^{L,i}_r+\Psi^{L,i}_r(e)\right)u^2+2\left(Y^{L,i}_r+\Psi^{L,i}_r(e)\right)u> 0,
$$
so
\begin{equation*}
	\begin{split}
	-\frac{\left(Y^{L,i}_r+\Psi^{L,i}_r(e)\right)^2}{\gamma^i_r(e)+Y^{L,i}_r+\Psi^{L,i}_r(e)}
	=&~\inf_{u\in\R}\Big[\left(\gamma^i_r(e)+Y^{L,i}_r+\Psi^{L,i}_r(e)\right)u^2+2\left(Y^{L,i}_r+\Psi^{L,i}_r(e)\right)u\Big]\\
	=&~\inf_{-2\leq u\leq 0}\Big[\left(\gamma^i_r(e)+Y^{L,i}_r+\Psi^{L,i}_r(e)\right)u^2+2\left(Y^{L,i}_r+\Psi^{L,i}_r(e)\right)u\Big].
	\end{split}
\end{equation*}
The same equalities hold if $(Y^{L,i},\Psi^{L,i})$ is replaced by $(Y^{N,i},\Psi^{N,i})$. Thus, we have
\begin{equation}
	\label{estimate1}
	\begin{split}
	&~\left|-\frac{\left(Y^{N,i}_r+\Psi^{N,i}_r(e)\right)^2}{\gamma^i_r(e)+Y^{N,i}_r+\Psi^{N,i}_r(e)}
	+\frac{\left(Y^{L,i}_r+\Psi^{L,i}_r(e)\right)^2}{\gamma^i_r(e)+Y^{L,i}_r+\Psi^{L,i}_r(e)}\right| \\
	=&~\left|\inf_{-2\leq u\leq 0}\Big[\left(\gamma^i_r(e)+Y^{N,i}_r+\Psi^{N,i}_r(e)\right)u^2+2\left(Y^{N,i}_r+\Psi^{N,i}_r(e)\right)u\Big] \right.\\
	&~~ \left.-\inf_{-2\leq u\leq 0}\Big[\left(\gamma^i_r(e)+Y^{L,i}_r+\Psi^{L,i}_r(e)\right) u^2+2\left(Y^{L,i}_r+\Psi^{L,i}_r(e)\right)u\Big]\right| \\
	\leq&~ \left(|\Delta Y^i_r|+|\Delta\Psi^i_r(e)|\right)\sup_{-2\leq u\leq 0}(u^2 +2|u|) \\
	=&~8\left(|\Delta Y^i_r|+|\Delta\Psi^i_r(e)|\right).
	\end{split}
\end{equation}
Taking \eqref{estimate1} into \eqref{itosquare}, and taking expectations on both sides of \eqref{itosquare}, %combined with
%$$-\frac{1}{\eta^i}\hat Y^i_r[(Y^{N,i})^2-(Y^{L,i})^2]\leq 0,$$
for a constant $\widehat c>8+8\nu(\mathcal E)+\max_{i\neq j}q^{ij}$, we have for any $0\leq s<t<T$
\begin{equation*}
	\begin{split}
	&~\E\left[\int_s^t|\Delta Z^i_r|^2\dr +\int_s^t\int_{\cE}|\Delta\Psi^i_r(e)|^2\,\nu(\de)\dr\right]\\
	\leq&~ \E\left[|\Delta Y^i_t|^2\right]+ 2\widehat c\sum_{j= 1}^{\ell}\E\left[\int_s^t|\Delta Y^i_r||\Delta Y^j_r|\dr\right] +2\widehat c\E\left[\int_{s}^{t}\int_{\cE}|\Delta Y^i_r||\Delta \Psi^i_r(e)|\,\nu(\de)\dr\right]\\
	\leq&~ \E\left|\Delta Y^i_t|^2\right]+\widehat c \sum_{j= 1}^{\ell}\E\left[\int_s^t|\Delta Y^j_r|^2\dr\right]+\left(\widehat c\ell+2\widehat c^2\nu(\cE)\right)\E\left[\int_s^t|\Delta Y^i_r|^2\dr\right] \\
	&+ \frac{1}{2\nu(\cE)}\E\left[ \int_{s}^{t}\Big(\int_{\cE}|\Delta \Psi^i_r(e)|\,\nu(\de)\Big)^2\dr\right]\\
	\leq&~ \E\left|[ \Delta Y^i_t|^2\right]+\left(\widehat c+\widehat c\ell+2\widehat c^2\nu(\cE)\right)\sum_{j= 1}^{\ell}\E\left[\int_s^t|\Delta Y^j_r|^2\dr\right]+\frac{1}{2}\E\left[\int_{s}^{t}\int_{\cE}|\Delta \Psi^i_r(e)|^2\,\nu(\de)\dr\right],
	\end{split}
\end{equation*}
where we used the Cauchy-Schwarz inequality to get last term.
It implies that
\begin{equation}\label{estimate2}
	\begin{split}
	&~\E\left[\int_s^t|\Delta Z^i_r|^2\dr +\frac{1}{2}\int_s^t\int_{\cE}|\Delta\Psi^i_r(e)|^2\,\nu(\de)\dr\right]\\
	\leq&~ \E\left[|\Delta Y^i_t|^2\right]+ \left(\widehat c+\widehat c\ell+2\widehat c^2\nu(\cE)\right) \sum_{j= 1}^{\ell}\E\left[\int_s^t|\Delta Y^j_r|^2\dr\right].
	\end{split}
\end{equation}
%Summing $i$ from $1$ to $\ell$, we get
%\begin{align}
%&\quad\sum_{j= 1}^{\ell}\E\int_s^t\Big[|\hat Z^i_r|^2 +\frac{1}{2}\int_s^t\int_{\cE}|\hat\Psi^i_r(e)|^2\nu(\de)\Big]\dr\nn\\
%&\leq \sum_{j= 1}^{\ell}\E|\hat Y^i_t|^2+\frac{c_1^2}{2}(1+\nu(\cE))\sum_{j= 1}^{\ell}\E\int_s^t|\hat Y^j_r|^2\dr.
%\end{align}
By \eqref{convergence-Y-1}, the right hand side of \eqref{estimate2} converges to zero as $N,L\rightarrow\infty$.
It implies that $(Z^{L,i},\Psi^{L,i})_L$ is a Cauchy sequence in $L^{2}_{\mathbb{G}}(0,t;\R^{n})\times L^{2}_{\mathcal{P}^{ \mathbb G }}(0, t;\R)$ and converges to some $(Z^{i,(t)},\Psi^{i,(t)})\in L^{2}_{\mathbb{G}}(0,t;\R^{n})\times L^{2}_{\mathcal{P}^{\mathbb G}}(0, t;\R)$ for any $t<T$. By uniqueness of the limit, we have $(Z^{i,(t_1)},\Psi^{i,(t_1)})=(Z^{i,(t_2)},\Psi^{i,(t_2)})$ for any $0\leq t_1<t_2<T$. Thus, we get a compatible limit denoted by $(Z^{i},\Psi^{i})$.

{\bf Completing the proof.} By the strong convergence of $(Z^{L,i},\Psi^{L,i})_L$, standard estimate implies that for any $0\leq t<T$,
\begin{equation}\label{sup-convergence-Y}
\lim_{L\rightarrow\infty}\E\left[\sup_{s\in[0,t]}\left|Y^{L,i}_s-Y^{i}_s\right|\right]=0, ~~ i\in\cM.
\end{equation}
%using \eqref{Riccatitrun}, Burkholder-Davis-Guandy's inequality and the strong convergence of $(Z^{L,i},\Psi^{L,i})$.

Finally, letting $L\rightarrow\infty$ in \eqref{BSDE-L} implies that $(Y^i,Z^i,\Psi^i)$ satisfies \eqref{Riccatilocal} for any $0\leq s\leq t<T$. Since $Y^i$ is bounded on $[0,t]$ for any $t<T$, we know that $\Psi^{i}\in L^{\infty}_{\mathcal{P}^{\mathbb G}}(0, t;\R)$; see e.g. \cite[Corollary 1]{Morlais2009}.
\end{proof}
%Noting that $\hat Y^i_r[f^i(Y^{N,i}_r,Y^{N,-i}_r,\Psi^{N,i}_r)-f^i(Y^{L,i}_r,Y^{N,-i}_r,\Psi^{N,i}_r)]\leq 0$,
%\begin{align*}
% \E\int_s^t|\hat Z^i_r|^2\dr
%&\leq \E|\hat Y^i_t|^2+2\E\int_s^t\hat Y^i_r[f^i(Y^{N,i}_r,Y^{N,-i}_r,\Psi^{N,i}_r)-f^i(Y^{L,i}_r,Y^{L,-i}_r,\Psi^{L,i}_r)]\dr\\
%&\leq \E|\hat Y^i_t|^2+2\E\int_s^t\hat Y^i_r[f^i(Y^{L,i}_r,Y^{N,-i}_r,\Psi^{N,i}_r)-f^i(Y^{L,i}_r,Y^{L,-i}_r,\Psi^{N,i}_r)]\dr\\
%&\qquad +2\E\int_s^t\hat Y^i_r[f^i(Y^{L,i}_r,Y^{L,-i}_r,\Psi^{N,i}_r)-f^i(Y^{L,i}_r,Y^{L,-i}_r,\Psi^{L,i}_r)]\dr\\
%&\leq \E|\hat Y^i_t|^2+c\sum_{j\neq i}\E\int_s^t|\hat Y^i_r||\hat Y^j_r|\dr+2\E\int_{s}^{t}\int_{\cE}|\hat Y^i_r||\hat \Psi^i_r(e)|\nu(\de)\dr
%\end{align*}
%
%Noting $(\lambda^i+Y^{L,i}+\Psi^{L,i}(e))|\beta|^2+2(Y^{L,i}+\Psi^{L,i}(e))\beta> 0$ if $\beta>0$ or $\beta<-2<-2\frac{Y^{L,i}+\Psi^{L,i}(e)}{\lambda^i(e)+Y^{L,i}+\Psi^{L,i}(e)}$, therefore
%\begin{align*}
%-\frac{(Y^{L,i}+\Psi^{L,i}(e))^2}{\lambda^i(e)+Y^{L,i}+\Psi^{L,i}(e)}
%&=\inf_{\beta\in\R}\Big[(\lambda^i+Y^{L,i}+\Psi^{L,i}(e))|\beta|^2+2(Y^{L,i}+\Psi^{L,i}(e))\beta\Big]\\
%&=\inf_{-2\leq\beta\leq 0}\Big[(\lambda^i+Y^{L,i}+\Psi^{L,i}(e))|\beta|^2+2(Y^{L,i}+\Psi^{L,i}(e))\beta\Big].
%\end{align*}
%Similar equalities hold if $(Y^{L,i},\Psi^{L,i})$ is replaced by $(Y^{N,i},\Psi^{N,i})$.
%
%\begin{align*}
%f^i(Y^{L,i}_r,Y^{L,-i}_r,\Psi^{N,i}_r)-f^i(Y^{L,i}_r,Y^{L,-i}_r,\Psi^{L,i}_r)
%\end{align*}
The next proposition shows that the solution obtained in Theorem \ref{thm:existence} is minimal. In Section \ref{sec:wellposedness-control}, we will prove that this minimal solution is indeed the unique one; refer to Theorem \ref{thm:uniqueness}, where we will use the following minimality result.
\begin{proposition}\label{prop:minimal}
	The solution $(Y^i,Z^i,\Psi^i)_{i\in\mathcal M}$ obtained in Theorem \ref{thm:existence} is minimal in the following sense: if $(\overline Y^i,\overline Z^i,\overline \Psi^i)_{i\in\cM}$ is another solution of \eqref{BSDE} such that $\overline Y^i$ is bounded from below by a stochastic process $S\in S^2_{\mathbb G}(\mathbb R)$, then $Y^i_t\leq \overline Y^i_t$ a.s. for any $t\in[0,T)$ and $i\in\mathcal M$.
\end{proposition}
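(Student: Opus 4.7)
The plan is to reduce to proving $Y^{L,i}_t\leq \overline Y^i_t$ a.s.\ for every $L>0$ and $t\in[0,T)$, and then pass to $L\to\infty$ using the monotone approximation $Y^i=\lim_{L\to\infty}Y^{L,i}$ constructed in Theorem \ref{thm:existence}. The strategy will be an It\^o--Gronwall estimate on each truncated interval $[0,T-\epsilon]$ in the spirit of the comparison principle Proposition \ref{prop:comparison}, combined with the singular terminal condition of $\overline Y^i$ to kill the terminal contribution as $\epsilon\searrow 0$.

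First I would establish a uniform-in-time upper bound for $Y^{L,i}$ that is sharper near $T$ than the bound $\phi(t):=\tfrac{\eta^\star}{T-t}+\tfrac{\lambda^\star}{3}(T-t)$ from Lemma \ref{lem:bound-Y} (which blows up as $t\nearrow T$). Using the value-function representation \eqref{value-L} and the trivially admissible strategy $(\xi,\beta)\equiv(0,0)$, which leaves $X_s\equiv x$ and produces a finite cost,
\[
Y^{L,i}_t\,x^2=V^L_t(x,i)\leq J^L_t(0,0;x,i)\leq \bigl(\lambda^\star(T-t)+L\bigr)x^2,
\]
hence $Y^{L,i}_t\leq C_L:=\lambda^\star T+L$ uniformly in $(t,i)\in[0,T]\times\mathcal M$. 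This uniform bound is precisely what will make the terminal estimate go through.

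Next, for fixed $L$ and $\epsilon\in(0,T)$, I would apply It\^o's formula to $\bigl[(Y^{L,i}_s-\overline Y^i_s)^+\bigr]^2$ on $[t,T-\epsilon]$ and bound the drift using Conditions 2--4 of Proposition \ref{prop:comparison}, which hold for the driver $f^i$ defined in \eqref{def:f} (see \eqref{Condition-4} and the monotonicity computations in the proof of Lemma \ref{lem:bound-Y}), absorbing the jump contributions from $\widetilde N$ via Young's inequality in the standard BSDE-with-jumps fashion. Taking expectation, summing over $i\in\mathcal M$, and invoking Gronwall would yield a constant $C$ depending only on $\eta^\star$, $\nu(\mathcal E)$, $Q$ and $T$, independent of $\epsilon$, such that
\[
\sum_{i\in\mathcal M}\mathbb E\Bigl[\bigl((Y^{L,i}_t-\overline Y^i_t)^+\bigr)^2\Bigr]\leq C\sum_{i\in\mathcal M}\mathbb E\Bigl[\bigl((Y^{L,i}_{T-\epsilon}-\overline Y^i_{T-\epsilon})^+\bigr)^2\Bigr].
\]

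The main obstacle, and the decisive step, is to show that the right-hand side vanishes as $\epsilon\searrow 0$; this is where the uniform bound $Y^{L,i}\leq C_L$ and the singular terminal condition $\overline Y^i_{T-\epsilon}\to+\infty$ a.s.\ (Definition \ref{def:solution}) combine cleanly:
\[
\bigl((Y^{L,i}_{T-\epsilon}-\overline Y^i_{T-\epsilon})^+\bigr)^2\leq C_L^2\,\mathbf 1_{\{\overline Y^i_{T-\epsilon}\leq C_L\}},
\]
and bounded convergence gives $\mathbb P(\overline Y^i_{T-\epsilon}\leq C_L)\to 0$. Therefore $(Y^{L,i}_t-\overline Y^i_t)^+=0$ a.s.\ for every $t<T$, i.e., $Y^{L,i}_t\leq \overline Y^i_t$ a.s., and sending $L\to\infty$ concludes the proof. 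The hypothesis $\overline Y^i\geq S\in S^2_{\mathbb G}(\mathbb R)$ is used to guarantee sufficient integrability of $\overline Y^i$ (in particular, control of its negative part) on each $[0,T-\epsilon]$ so that the It\^o--Gronwall estimate above is justified.
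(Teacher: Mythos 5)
Your proposal follows essentially the same route as the paper: fix $L$, compare $Y^{L,i}$ with $\overline Y^i$ via an It\^o estimate on $\bigl[(Y^{L,i}-\overline Y^i)^+\bigr]^2$, use the singular terminal condition of $\overline Y^i$ together with a bound on $Y^{L,i}$ to kill the terminal term, and then let $L\to\infty$. (The paper runs the Gronwall step with an exponential weight $e^{a\cdot}$ and conditional expectations rather than expectations plus Gronwall, but that is cosmetic; your explicit uniform bound $Y^{L,i}_t\leq\lambda^\star T+L$ from the strategy $(0,0)$ is a clean substitute for the paper's appeal to $Y^{L,i}\in S^\infty_{\mathbb G}$.)

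There is, however, one step that is wrong as literally written: the inequality
\begin{equation*}
\bigl((Y^{L,i}_{T-\epsilon}-\overline Y^i_{T-\epsilon})^+\bigr)^2\leq C_L^2\,\mathbf 1_{\{\overline Y^i_{T-\epsilon}\leq C_L\}}
\end{equation*}
fails if $\overline Y^i_{T-\epsilon}$ is very negative, since on the event $\{\overline Y^i<Y^{L,i}\}$ one only has $(Y^{L,i}-\overline Y^i)^+\leq C_L+(\overline Y^i)^-$, which is not bounded by $C_L$. Definition \ref{def:solution} does not force $\overline Y^i\geq 0$, so this is exactly where the hypothesis $\overline Y^i\geq S$ with $S\in S^2_{\mathbb G}(\mathbb R)$ must be invoked: it gives the domination $\bigl((Y^{L,i}_{T-\epsilon}-\overline Y^i_{T-\epsilon})^+\bigr)^2\leq\bigl(C_L+\sup_{t}S_t^-\bigr)^2\mathbf 1_{\{\overline Y^i_{T-\epsilon}\leq C_L\}}$ with an integrable majorant, after which dominated (not bounded) convergence and $\overline Y^i_{T-\epsilon}\to+\infty$ a.s.\ finish the argument. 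This is precisely the role the paper assigns to $S$ (via the truncation $S\wedge n$ and dominated convergence); you cite the hypothesis but attribute it to the It\^o--Gronwall step on $[0,T-\epsilon]$, where it is in fact not needed because $\overline Y^i\in S^2_{\mathbb G}(0,T-\epsilon;\mathbb R)$ already holds by Definition \ref{def:solution}. With that correction the proof goes through and matches the paper's.
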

\begin{proof}
Fix $L>0$ and let $(Y^{L,i},Z^{L,i},\Psi^{L,i})_{i\in\cM}$ denote the solution of \eqref{BSDE-L}. Let $(\overline Y^i,\overline Z^i,\overline \Psi^i)_{i\in\cM}$ be any solution of \eqref{BSDE} with a lower bound $S\in S^2_{\mathbb G}(\mathbb R)$. Set
\begin{align*}
	\Delta Y^i=Y^{L,i}-\overline Y^{i}, \quad \Delta Z^i=Z^{L,i}-\overline Z^{i}, \quad \Delta \Psi^i=\Psi^{L,i}-\overline \Psi^{i}.
\end{align*}
Let $a>0$ be a constant to be determined later.
Applying the Meyer-It\^o formula \cite[Theorem 70]{Protter2005} to $e^{a\cdot}|(\Delta Y^i_\cdot)^+|^2$, we have for any $0\leq s\leq t<T$,
\begin{equation*}
	\begin{split}
	&~ e^{as}|(\Delta Y^i_s)^+|^2+a\int_s^t e^{ar}|(\Delta Y^i_r)^+|^2\dr+\int_s^t\mathbf{1}_{\{\Delta Y^i_{r-}>0\}}e^{ar}|\Delta Z^i_r|^2\dr \\
	&~+\int_s^t\int_{\cE}e^{ar}\left[\left((\Delta Y^i_{r-}+\Delta\Psi^i_r(e))^+\right)^2-\left((\Delta Y^i_{r-})^+\right)^2\right]N(\dr,\de) \\
	=&~ e^{at}|(\Delta Y^i_t)^+|^2+2\int_s^t e^{ar}(\Delta Y^i_r)^+\left(f^i(r,Y^{L,i}_r,Y^{L,-i}_r,\Psi^{L,i}_r)-f^i(r,\overline Y^{i}_r,\overline Y^{-i}_r,\overline \Psi^{i}_r)\right)\dr \\
	&~-2\int_s^te^{ar}\Delta Y^i_r(\Delta Z^i_r)^{\top}\dw_r+2\int_s^t\int_{\cE}e^{ar}(\Delta Y^i_{r})^+\Delta\Psi^i_r(e)\,\nu(\de)\dr.
	\end{split}
\end{equation*}
Taking conditional expectations, we have
\begin{equation}\label{estimate4}
	\begin{split}
	&~ e^{as}\left|(\Delta Y^i_s)^+\right|^2+a\E\left[ \left. \int_s^t e^{ar}\left|(\Delta Y^i_r)^+\right|^2\dr \;\right|\mathcal G_s \right] \\
	\leq&~ \E\left[\left.e^{at}|(\Delta Y^i_t)^+|^2+2\int_s^t e^{ar}(\Delta Y^i_r)^+\left(f^i(r,Y^{L,i}_r,Y^{L,-i}_r,\Psi^{L,i}_r)-f^i(r,\overline Y^{i}_r,\overline Y^{-i}_r,\overline \Psi^{i}_r)\right)\dr\;\right|\mathcal G_s\right] \\
	&~ -\E\left[ \left. \int_s^t\int_{\cE}e^{ar}\Big(\left((\Delta Y^i_{r}+\Delta\Psi^i_r(e))^+\right)^2-\left((\Delta Y^i_{r})^+\right)^2-2\left(\Delta Y^i_{r}\right)^+\Delta\Psi^i_r(e)\Big)\,\nu(\de)\dr \;\right|\mathcal G_s \right].
	\end{split}
\end{equation}
Next we will estimate the second term on the right hand side of \eqref{estimate4}.
First, similarly to \eqref{estimate1}, we have
\begin{equation*}\label{estimate3}
	\begin{split}
	& -\frac{(Y^{L,i}_r+\Psi^{L,i}_r(e))^2}{\gamma^i_r(e)+Y^{L,i}_r+\Psi^{L,i}_r(e)}
	+\frac{(\overline Y^{i}_r+\overline \Psi^{i}_r(e))^2}{\gamma^i_r(e)+\overline Y^{i}_r+\overline \Psi^{i}_r(e)} \\
	=&~\inf_{-2\leq u\leq 0}\Big((\gamma^i_r(e)+Y^{L,i}_r+\Psi^{L,i}_r(e))u^2+2(Y^{L,i}_r+\Psi^{L,i}_r(e))u\Big)\\
	&~-\inf_{-2\leq u\leq 0}\Big((\gamma^i_r(e)+\overline Y^{i}_r+\overline \Psi^{i}_r(e))u^2+2(\overline Y^{i}_r+\overline \Psi^{i}_r(e))u\Big) \\
	\leq&~ \sup_{-2\leq u\leq 0}\Big((\Delta Y^i_r+\Delta\Psi^i_r(e))(u^2 +2u)\Big) \\
	=&~-\left(\Delta Y^i_r+\Delta\Psi^i_r(e)\right)\mathbf{1}_{\{\Delta Y^i_r+\Delta\Psi^i_r(e)\leq0\}}.
	\end{split}
\end{equation*}
Second, it holds that
\begin{equation*}
	\begin{split}
	(\Delta Y^i_r)^+\left(-\frac{(Y^{L,i}_r)^2}{\eta^i_r}+\frac{(\overline Y^i_r)^2}{\eta^i_r}+\sum_{j=1}^{\ell}q^{ij}\Delta Y^i_r\right) \leq \left(\Delta Y^i_r\right)^+\sum_{j\neq i}q^{ij}\left(\Delta Y^j_r\right)^+.
	\end{split}
\end{equation*}
Thus, we have
\begin{equation*}
	\begin{split}
	&~(\Delta Y^i_r)^+\left(f^i(r,Y^{L,i}_r,Y^{L,-i}_r,\Psi^{L,i}_r)-f^i(r,\overline Y^{i}_r,\overline Y^{-i}_r,\overline \Psi^{i}_r)\right) \\
	 \leq&~ -\int_{\cE}(\Delta Y^i_r)^+\Delta\Psi^i_r(e)\mathbf{1}_{\{\Delta Y^i_r+\Delta\Psi^i_r(e)\leq0\}}\nu(\de)+(\Delta Y^i_r)^+\sum_{j\neq i}q^{ij}(\Delta Y^j_r)^+.
	\end{split}
\end{equation*}
Taking the above inequality into \eqref{estimate4}, we get
\begin{align*}
	&~e^{as}\left|(\Delta Y^i_s)^+\right|^2+a\E\left[ \left. \int_s^t e^{ar}\left|(\Delta Y^i_r)^+\right|^2\dr \;\right|\mathcal G_s \right] \\
	\leq&~\E\left[ \left. e^{at}|(\Delta Y^i_t)^+|^2+2\int_s^t e^{ar}(\Delta Y^i_r)^+\sum_{j\neq i}q^{ij}(\Delta Y^j_r)^+\dr \;\right|\mathcal G_s \right] \\
	&~ -\E\bigg[ \int_s^t\int_{\cE}e^{ar}\Big( \left((\Delta Y^i_{r}+\Delta\Psi^i_r(e))^+\right)^2-((\Delta Y^i_{r})^+)^2\\
	&\qquad\qquad\qquad\qquad-2\left(1-\mathbf{1}_{ \{\Delta Y^i_r+\Delta\Psi^i_r(e)\leq0\}}\right)(\Delta Y^i_{r})^+\Delta\Psi^i_r(e)\Big)\nu(\de)\dr \;\bigg|\;\mathcal G_s \bigg].
\end{align*}
Using the elementary inequality
$
\big((x+y)^+\big)^2-2cx^+y\geq 0
$
for any $(x,y)\in\R\times\R$ and any $c\in[0,1]$, we have
\begin{equation*}\label{estimate5}
	\begin{split}
	&~e^{as}\left|(\Delta Y^i_s)^+\right|^2+a\E\left[ \left. \int_s^t e^{ar}\left|(\Delta Y^i_r)^+\right|^2\dr \;\right|\mathcal G_s \right] \\
	\leq&~ \E\left[ \left. e^{at}\left|(\Delta Y^i_t)^+\right|^2+2\int_s^t e^{ar}(\Delta Y^i_r)^+\sum_{j\neq i}q^{ij}(\Delta Y^j_r)^+\dr \;\right|\mathcal G_s \right] +\nu(\cE)\E\left[\left.\int_s^te^{ar}|(\Delta Y^i_{r})^+|^2\dr\;\right|\mathcal G_s\right] \\
	\leq&~ \E\left[\left.e^{at}\left|(\Delta Y^i_t)^+\right|^2+\breve c\sum_{j=1}^{\ell}\int_s^te^{ar}\left|(\Delta Y^j_{r})^+\right|^2\dr \;\right|\mathcal G_s \right],
	\end{split}
\end{equation*}
where $\breve c\geq \ell \max_{i\neq j}q^{ij}+\nu(\cE)$. Taking sums in terms of $i$ from $1$ to $\ell$, we get
\begin{equation*}
	\begin{split}
	&~\sum_{j=1}^{\ell}e^{as}\left|(\Delta Y^j_s)^+\right|^2+a\E\left[\left.\sum_{j=1}^{\ell}\int_s^t e^{ar}\left|(\Delta Y^j_r)^+\right|^2\dr\;\right|\mathcal G_s\right]\\
	\leq&~ \sum_{j=1}^{\ell}\E\left[\left.e^{at}\left|(\Delta Y^j_t)^+\right|^2\;\right|\mathcal G_s\right]+\breve c\ell\mathbb E\left[ \left. \sum_{j=1}^{\ell}\int_s^te^{ar}\left|(\Delta Y^j_{r})^+\right|^2\dr \;\right|\mathcal G_s \right].
	\end{split}
\end{equation*}
By choosing $a=\breve c\ell$, we deduce from the above inequality that
\begin{align*}
	\sum_{j=1}^{\ell}e^{as}\left|( Y^{L,j}_s-\overline Y^j_s)^+\right|^2=\sum_{j=1}^{\ell}e^{as}\left|(\Delta Y^j_s)^+\right|^2
	&\leq \sum_{j=1}^{\ell}\E\left[ \left. e^{at} \left|(\Delta Y^j_t)^+\right|^2 \;\right|\mathcal G_s \right]\\
	&=\sum_{j=1}^{\ell}\E\left[ \left. e^{at}\left|(Y^{L,j}_t-\overline Y^j_t)^+\right|^2 \;\right|\mathcal G_s \right].
\end{align*}
Noting that $\overline Y^j$ is bounded from below by $S$, we have
$$
\left|(Y^{L,j}_t-\overline Y^j_t)^+\right|^2\leq \sup_{0\leq t\leq T} \left|(Y^{L,j}_t -S_t\wedge n )^+\right|^2,
$$
which is conditionally integrable for each $n$. By the dominated convergence theorem, recalling that $Y^{L,j}\in S^{\infty}_{\mathbb{G}}(\R)$ by Lemma \ref{lem:bound-Y} and that $\lim_{t\nearrow T}\overline Y_t^j=\infty$, we get
\begin{align*}
	\sum_{j=1}^{\ell}e^{as}\left|( Y^{L,j}_s-\overline Y^j_s)^+\right|^2\leq \lim_{t\nearrow T}\sum_{j=1}^{\ell}\E\left[ \left. e^{at} \left|(Y^{L,j}_t-\overline Y^j_t \wedge n )^+\right|^2 \;\right|\mathcal G_s \right] = \sum_{j=1}^\ell e^{aT}\left|(L-n )^+\right|^2 .
\end{align*}
Letting $n\rightarrow\infty$, this implies $Y^{L,j}_s\leq\overline Y^j_s$ a.s. for all $s\in[0,T)$ and $j\in\cM$. Finally, sending $L\rightarrow\infty$ yields the claim. %\sout{by \eqref{sup-convergence-Y}}
\end{proof}

\section{Solvability of the control problem \eqref{cost}-\eqref{state}\\
and uniqueness of the BSDE system \eqref{BSDE} }\label{sec:wellposedness-control}

In this section, we will solve the state constrained stochastic control problem \eqref{cost}-\eqref{state} by considering the limit in Section \ref{sec:unconstrained}.
%. First, we penalize the open terminal position by a given degree, and consider the control problem where the terminal state constraint is not binding. Then we let the penalization degree go to infinity and verify that the limit of the optimal strategy of the unconstrained problem is the solution to the constrained one \eqref{cost}-\eqref{state}.
As a byproduct, we will verify that the minimal solution of the BSDE system \eqref{BSDE} established in Theorem \ref{thm:existence} is the unique one.
%\subsection{The unconstrained control problem}
%For $L>0$ and $x>0$ we consider the following unconstrained control problem
%\begin{equation}\label{LQpenalize}
%	\begin{split}
%	&~V^L(x,i_0):=\inf_{(\xi,\beta)\in\mathcal{A}}J^L(\xi,\beta;x,i_0)\\
%	=&~\inf_{(\xi,\beta)\in\mathcal{A}}
%	\E\left[\int_0^T\left(\eta_t^{\alpha_{t-}}|\xi_t|^2+\lambda_t^{\alpha_{t-}}|X_t|^2
%	+\int_{\cE}\gamma_t^{\alpha_{t-}}(e)|\beta_t(e)|^2\nu(\de)\right)\dt+L|X_T|^2\right],
%	\end{split}
%\end{equation}
%where the position dynamics $X$ follows \eqref{state}, but without the terminal constraint $X_T=0$, and the space of admissible strategies is
%\textcolor{red}{$
%	\mathcal{A}:= L^2_\mathbb{G}(\mathbb{R})\times L^{2}_{\mathcal{P}^{\mathbb G}}(\mathbb{R}).
%$}
%
%
%
%\blue{The following lemma is a special case of \cite[Theorem 4.1]{SX2024}.}
%\begin{lemma}
%	Let $(Y^{L,i},Z^{L,i},\Psi^{L,i})_{i\in\cM}$ be the unique uniformly positive solution to \eqref{Riccatitrun}. Let Assumption \ref{ass1} hold. Then the optimal trading rates in the traditional venue and in the dark pool are
%	\begin{align*}
%		\hat \xi^{L}_t= \frac{Y^{L,\alpha_{t-}}_{t}}{\eta^{\alpha_{t-}}_t}X_{t-}, \qquad
%		\hat \beta^{L}_t(e)= \frac{Y^{L,\alpha_{t-}}_{t-}+\Psi^{L,\alpha_{t-}}_t(e)}
%		{\gamma^{L,\alpha_{t-}}_t(e)+Y^{L,\alpha_{t-}}_{t-}+\Psi^{\alpha_{t-}}_t(e)}X_{t-}.
%	\end{align*}
%%	is optimal to the problem \eqref{LQpenalize}. And
%Moreover, the value function is
%%	\begin{align*}
%	$	V^L(x,i_0)=Y^{L,i_0}_{0}x^2$.
%%	\end{align*}
%\end{lemma}

\subsection{Solve the control problem \eqref{cost}-\eqref{state}}

Define the value function as
\begin{align*}
	V_t(x,i)&:= \inf_{(\xi,\beta)\in\mathcal A_0}J_t(\xi,\beta;x,i)\\
	& :=\inf_{(\xi,\beta)\in\mathcal A_0}\mathbb E\left[\left. \int_t^T 		 \left(\eta_s^{\alpha_{s-}} \xi_s^2+\lambda_s^{\alpha_{s-}} X_s^2
	+\int_{\cE}\gamma_s^{\alpha_{s-}}(e) \beta^2_s(e)\,\nu(\de)\right)\ds \;\right|\mathcal F_t\right],
\end{align*}
where $(X_t,\alpha_t)=(x,i)$ is the initial state and initial regime.

\begin{theorem}\label{thm:verification}
	Let Assumption \ref{ass1} hold. Let $(Y^i,Z^i,\Psi^i)_{i\in\cM}$ be the minimal solution of \eqref{BSDE} established in Theorem \ref{thm:existence}. Then the optimal liquidation rates in the traditional venue and in the dark pool venue in feedback form are
	\begin{align}\label{optimal-control}
		\hat \xi^{}_s= \frac{Y^{\alpha_{s-}}_{s}}{\eta^{\alpha_{s-}}_s}X_{s}, \qquad
		\hat \beta^{}_s(e)= \frac{Y^{\alpha_{s-}}_{s-}+\Psi^{\alpha_{s-}}_s(e)}
		{\gamma^{\alpha_{s-}}_s(e)+Y^{\alpha_{s-}}_{s-}+\Psi^{\alpha_{s-}}_s(e)}X_{s-}.
	\end{align}
%	is optimal to the problem \eqref{LQ}. And
The optimal position admits the following expression
\begin{equation}\label{optimal-position}
	\hat X^{}_s=x\exp\left(-\int_t^s\frac{Y^{\alpha_{r-}}_{r}}{\eta^{\alpha_{r-}}_r}\dr\right)
	\prod_{t<r\leq s}\left(1-\int_{\cE}\frac{Y^{\alpha_{r-}}_{r-}+\Psi^{\alpha_{r-}}_r(e)}
	{\gamma^{\alpha_{r-}}_r(e)+Y^{\alpha_{r-}}_{r-}+\Psi^{\alpha_{r-}}_r(e)}N(\{r\},\de)\right).
\end{equation}
In particular, $\hat X$ is nonincreasing and $0\leq \hat X_s\leq x$.
Moreover, the value function starting at $(t,X_t,\alpha_t)=(t,x,i)$ satisfies%follows \blue{We should introduce the notation of $V_t(X_t,i)=\E[\cdot|X_t,\alpha_t=i]$? or $V_t(X_t,\alpha_t)=\E[\cdot|\mathcal{F}_t]$?}
	\begin{align}\label{value}
		V_t(x,i)= Y^{i}_{t}x^2.
	\end{align}
\end{theorem}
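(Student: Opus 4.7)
The plan is to establish the two inequalities of \eqref{value} separately. The lower bound $V_t(x,i)\geq Y^i_t x^2$ follows from a soft penalisation-limit argument built on Section \ref{sec:unconstrained}, while the upper bound $V_t(x,i)\leq Y^i_t x^2$ comes from a verification argument based on It\^o's formula applied to $Y^{\alpha_s}_s X_s^2$, from which the optimality of the feedback \eqref{optimal-control} and the explicit form \eqref{optimal-position} also drop out.

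\textbf{Lower bound.} For any admissible $(\xi,\beta)\in\mathcal A_0$ one has $X_T=0$, so the penalty term in \eqref{cost-L} vanishes and $J^L_t(\xi,\beta;x,i)=J_t(\xi,\beta;x,i)$. Taking infima and using the identification $V^L_t(x,i)=Y^{L,i}_t x^2$ from \eqref{value-L} gives $Y^{L,i}_t x^2\leq V_t(x,i)$ for every $L>0$; sending $L\to\infty$ and invoking $Y^{L,i}_t\nearrow Y^i_t$ from Theorem \ref{thm:existence} yields the desired inequality.

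\textbf{Upper bound via verification.} The pathwise product formula \eqref{optimal-position} defines a non-increasing process $\hat X\in[0,x]$, since each multiplicative factor $(Y^{\alpha_{r-}}_{r-}+\Psi^{\alpha_{r-}}_r(e))/(\gamma^{\alpha_{r-}}_r(e)+Y^{\alpha_{r-}}_{r-}+\Psi^{\alpha_{r-}}_r(e))$ lies in $[0,1]$ by the strict positivity of $Y$ (Theorem \ref{thm:existence}) together with the limiting version of \eqref{estimate:Y+Psi}, which survives passing $L\to\infty$ via Fatou and the $L^2$-convergence of $\Psi^{L,i}$ from the proof of Theorem \ref{thm:existence}. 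The liquidation constraint $\hat X_T=0$ holds because $Y^{\alpha_r}_r\geq(e^{\check c(T-r)}-1)^{-1}$ forces $\int_t^T Y^{\alpha_r}_r/\eta^{\alpha_r}_r\,\dr=+\infty$, so the continuous factor in \eqref{optimal-position} collapses to $0$ at $T$; square-integrability of $(\hat\xi,\hat\beta)$ then follows from this decay combined with the upper bound in \eqref{lowerupper}, so $(\hat\xi,\hat\beta)\in\mathcal A_0$. Next, apply It\^o's formula to $Y^{\alpha_s}_s X_s^2$ on $[t,T-\epsilon]$ for an arbitrary $(\xi,\beta)\in\mathcal A_0$. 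The key algebraic cancellation is that the $\sum_j q^{ij}Y^j$ term in the BSDE driver of \eqref{BSDE} is precisely the compensator of the jumps of $s\mapsto Y^{\alpha_s}_s$, so only the Brownian and compensated Poisson martingales remain on the martingale side. Grouping the resulting drift with the running cost and completing the squares in $\xi$ and $\beta$ yields
\begin{equation*}
\dd(Y^{\alpha_s}_s X_s^2)+[\text{running cost}]\,\ds
=\Big[\eta^{\alpha_{s-}}_s(\xi_s-\hat\xi_s)^2+\int_{\mathcal E}(\gamma^{\alpha_{s-}}_s(e)+Y^{\alpha_{s-}}_{s-}+\Psi^{\alpha_{s-}}_s(e))(\beta_s(e)-\hat\beta_s(e))^2\,\nu(\de)\Big]\ds+\dd M_s,
\end{equation*}
where $\hat\xi,\hat\beta$ denote the feedback expressions in \eqref{optimal-control} evaluated along the arbitrary state $(X,\alpha)$, and the identity uses $Y^{\alpha_{s-}}_{s-}+\Psi^{\alpha_{s-}}_s(e)>0$ to drop the indicator in \eqref{def:f}. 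Integrating, taking $\E_t[\cdot]$, and sending $\epsilon\downarrow 0$ gives
\[
J_t(\xi,\beta;x,i)\geq Y^i_t x^2-\liminf_{\epsilon\downarrow 0}\E_t\big[Y^{\alpha_{T-\epsilon}}_{T-\epsilon}X_{T-\epsilon}^2\big],
\]
which becomes an equality at $(\xi,\beta)=(\hat\xi,\hat\beta)$ since the completion-of-squares bracket then vanishes identically. Combining the upper bound on $Y$ from \eqref{lowerupper} with the pathwise decay of $\hat X$ read off from \eqref{optimal-position} produces $\E_t[Y^{\alpha_{T-\epsilon}}_{T-\epsilon}\hat X_{T-\epsilon}^2]\to 0$, so $J_t(\hat\xi,\hat\beta;x,i)=Y^i_tx^2$, giving both $V_t(x,i)\leq Y^i_tx^2$ and the optimality of \eqref{optimal-control}.

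\textbf{Main obstacle.} The delicate step is the transversality limit $\E_t[Y^{\alpha_{T-\epsilon}}_{T-\epsilon}\hat X_{T-\epsilon}^2]\to 0$: since $Y$ grows like $1/(T-s)$ near $T$, one needs $\hat X_{T-\epsilon}^2=o(\epsilon)$ in a suitable conditional-expectation sense, which requires combining the $Y$-lower bound of Lemma \ref{lem:bound-Y} (yielding polynomial pathwise decay of the continuous factor in \eqref{optimal-position}) with control of the multiplicative jump correction (whose factors are bounded in $[0,1]$) and the parameter bounds of Assumption \ref{ass1}. This is where the liquidation constraint and the singular terminal condition genuinely interact, rather than in the It\^o identity or the feedback formulas themselves.
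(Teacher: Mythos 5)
Your overall architecture matches the paper's: the lower bound $V_t(x,i)\ge Y^i_tx^2$ via $J_t=J^L_t$ on $\mathcal A_0$ and the penalization limit, and the upper bound via It\^o's formula applied to $Y^{\alpha}_\cdot X_\cdot^2$ with completion of squares along the feedback control. The pathwise argument for $\hat X_{T-}=0$ (divergence of $\int^T Y^{\alpha_r}_r/\eta^{\alpha_r}_r\,\dr$ forced by the lower bound on $Y$) is a legitimate alternative to the paper's route, which instead observes that $\zeta^t_{\hat t}$ in \eqref{veri} is a nonnegative martingale converging a.s. and divides by $Y^{\alpha_{\hat t}}_{\hat t}\to\infty$.

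There is, however, a genuine gap in your admissibility step. You claim $\hat\xi\in L^2_{\mathbb F}(\R)$ follows from the pathwise decay of $\hat X$ combined with the upper bound in \eqref{lowerupper}. Quantitatively, the lower bound $Y^{\alpha_r}_r\ge (e^{\check c(T-r)}-1)^{-1}$ only yields $\hat X_s\lesssim (T-s)^{\kappa}$ with $\kappa=\bigl(\eta^\star\check c\,e^{\check cT}\bigr)^{-1}$ (or at best $\kappa=(\eta^\star\check c)^{-1}$ near $T$), while $\hat\xi_s=Y^{\alpha_{s-}}_s\hat X_s/\eta^{\alpha_{s-}}_s\lesssim (T-s)^{\kappa-1}$; this is square integrable only when $\kappa>1/2$, which fails whenever $\eta^\star/\eta_\star$ or $\nu(\cE)$ is large, since $\check c\ge\max\{\eta_\star^{-1},\nu(\cE)\}$. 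The correct argument is the paper's: take expectations in the It\^o identity \eqref{veri} to obtain $\E[\int_0^{\hat t}\eta^{\alpha_{s-}}_s\hat\xi_s^2\ds+\cdots]\le Y^{i_0}_0x_0^2$ uniformly in $\hat t<T$, and conclude by monotone convergence. The same exponent problem undermines the route you propose for your declared ``main obstacle'', the transversality limit $\E_t[Y^{\alpha_{T-\epsilon}}_{T-\epsilon}\hat X_{T-\epsilon}^2]\to0$. Fortunately that limit is not needed here at all: since $Y^{\alpha_{\hat t}}_{\hat t}\hat X_{\hat t}^2\ge0$, the identity already gives $J_t(\hat\xi,\hat\beta;x,i)\le Y^i_tx^2$, which combined with your penalization lower bound $V_t(x,i)\ge Y^i_tx^2$ closes the sandwich $Y^i_tx^2\ge J_t(\hat\xi,\hat\beta;x,i)\ge V_t(x,i)\ge Y^i_tx^2$. (The transversality estimate genuinely matters only in the uniqueness theorem, where the verification must run over arbitrary admissible controls and an arbitrary solution of \eqref{BSDE}, and the paper handles it there by a different device, restricting to bounded $\beta$ with $\beta_s\to0$ and rewriting $X$ as a backward equation.)
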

\begin{proof}%[Proof of Theorem \ref{thm:verification}]
Substituting \eqref{optimal-control} into the state process \eqref{state}, we have
\begin{equation*}
	\left\{\begin{split}
		\dd X_s=&~-X_{s-}\left(\frac{Y^{\alpha_{s-}}_{s}}{\eta^{\alpha_{s-}}_s}\ds
+\int_{\cE}\frac{Y^{\alpha_{s-}}_{s-}+\Psi^{\alpha_{s-}}_s(e)}
		{\gamma^{\alpha_{s-}}_s(e)+Y^{\alpha_{s-}}_{s-}+\Psi^{\alpha_{s-}}_s(e)} \,N(\ds,\de)\right),\\
		X_{t}=&~x, \quad \alpha_{t}=i,
	\end{split}\right.
\end{equation*}
whose unique solution is given by \eqref{optimal-position}. To verify $\hat X$ is the optimal position, it remains to verify that the trading rates in \eqref{optimal-control} are admissible and optimal.

{\bf Step 1: $(\hat\xi,\hat\beta)$ is admissible.} We use It\^o's formula (refer to e.g. \cite[Lemma 4.3]{Hu2020} for the differential w.r.t. a Markov chain)
to $Y^{\alpha_{s}}_s \hat X^2_s$ from $s=t$ to $s=\hat t\in (t,T)$, and obtain
\begin{equation}\label{veri}
	\begin{split}
	&~ Y^{\alpha_{\hat t}}_{\hat t} \hat X_{\hat t}^2+\int_t^{\hat t} \left(\eta_s^{\alpha_{s-}}\hat \xi_s^2+\lambda_s^{\alpha_{s-}}\hat X_s^2
	+\int_{\cE}\gamma_s^{\alpha_{s-}}(e) \hat \beta^2_s(e)\,\nu(\de)\right)\ds \\
	=&~ Y^{\alpha_{\hat t}}_{\hat t}\hat X_{\hat t}^2+\int_t^{\hat t} \hat X_{s-}^2\left(\eta_s^{\alpha_{s-}}\left|\frac{Y^{\alpha_{s-}}_{s-}}{\eta^{\alpha_{s-}}_s}\right|^2
	+\lambda_s^{\alpha_{s-}}
	+\int_{\cE}\gamma_s^{\alpha_{s-}}(e)\left|\frac{Y^{\alpha_{s-}}_{s-}+\Psi^{\alpha_{s-}}_s(e)}
	{\gamma^{\alpha_{s-}}_s(e)+Y^{\alpha_{s-}}_{s-}+\Psi^{\alpha_{s-}}_s(e)}\right|^2\nu(\de)\right)\ds\\
	=&~Y^{\alpha_t}_t\hat X^2_t +\int_t^{\hat t} \hat X_s^2(Z^{\alpha_{s-}}_s)^{\top}\dw_s + \int_t^{\hat t}\hat X_{s-}^2\sum_{j,j'\in\cM}(Y^j_s-Y^{j'}_s)\mathbf{1}_{\{\alpha_{s-}=j'\}} \dd\widetilde N^{j'j}_s \\
	&~+\int_t^{\hat t} \hat X_{s-}^2 \int_{\cE} \left((Y^{\alpha_{s-}}_{s-}+\Psi^{\alpha_{s-}}_s(e))\left(\left(1-\frac{Y^{\alpha_{s-}}_{s-}+\Psi^{\alpha_{s-}}_s(e)}	{\gamma^{\alpha_{s-}}_s(e)+Y^{\alpha_{s-}}_{s-}+\Psi^{\alpha_{s-}}_s(e)}\right)^2-1\right)+\Psi^{\alpha_{s-}}_s(e)\right)\widetilde N(\ds,\de),
	\end{split}
\end{equation}
where $\{N^{j'j}\}_{j,j'\in\cM}$ are independent Poisson processes with intensities $q^{j'j}$, and $\widetilde N^{j'j}_t=N^{j'j}_t-q^{j'j}t$ are the corresponding compensated Poisson processes. It follows that the process
\begin{align*}
	\zeta^t_{\hat t}:=Y^{\alpha_{\hat t}}_{\hat t} \hat X_{\hat t}^2 - Y^{\alpha_t}_t \hat X^2_t + \int_t^{\hat t} \left(\eta_s^{\alpha_{s-}} \hat\xi_s^2+\lambda_s^{\alpha_{s-}}\hat X_s^2
	+\int_{\cE}\gamma_s^{\alpha_{s-}}(e)|\hat\beta^{\alpha_{s-}}_s(e)|^2\,\nu(\de)\right)\ds
\end{align*}
is a nonnegative martingale on $[t,T)$. Thus, it converges almost surely as $\hat t\nearrow T$. Since $\{Y^i\}_{i\in\cM}$ is uniformly positive and satisfies the terminal condition $\lim_{\hat t\nearrow T}Y_{\hat t}^i=+\infty, \ i\in\cM$, we obtain that
\begin{align*}
	0\leq \hat X_{\hat t}^{}\leq \sqrt{ \frac{ \zeta^t_{\hat t} + Y^{\alpha_t}_t \hat X^2_t }{Y^{\alpha_{\hat t}}_{\hat t}}}\rightarrow 0,\quad \textrm{ a.s. as }\hat t\nearrow T.
\end{align*}
As a result, $\hat X_{T-}=0$ and thus $\hat X_T=0$.

Letting $t=0$ and taking expectations on both sides of \eqref{veri}, we have
\begin{equation}\label{eq:xi-L2}
	\begin{split}
		\infty>Y^{i_0}_{0}x^2_0=&~\E\left[Y^{\alpha_{\hat t}}_{\hat t}\hat X_{\hat t}^2\right]+\E\left[\int_0^{\hat t} \left(\eta_s^{\alpha_{s-}}\hat \xi_s^2+\lambda_s^{\alpha_{s-}}\hat X_s^2
		+\int_{\cE}\gamma_s^{\alpha_{s-}}(e)|\hat \beta_s(e)|^2\,\nu(\de)\right)\ds\right]\\
		\geq&~\E\left[\int_0^{\hat t} \left(\eta_s^{\alpha_{s-}}\hat \xi_s^2+\lambda_s^{\alpha_{s-}}\hat X_s^2
		+\int_{\cE}\gamma_s^{\alpha_{s-}}(e)|\hat \beta_s(e)|^2\,\nu(\de)\right)\ds\right].
	\end{split}
\end{equation}
Thus,
$(\hat \xi,\hat\beta)\in\mathcal{A}_0$.

%{\bf Step 2: convergence of the conditional expectation $\lim_{\hat t\rightarrow\infty}\mathbb E\left[\left.Y^{\alpha_{\hat t}}_{\hat t}\hat X^2_{\hat t} \;\right|\mathcal\; F_t \right] = 0$.} \red{
%By {\bf Step 1}, $\hat X$ can be rewritten as a backward equation
%\[
%\hat X_{\hat t} = \int_{\hat t}^T\hat\xi_s\ds + \int_{\hat t}^T\int_{\mathcal E}\hat\beta_s(e) N(\ds,\de).
%\]
%By \eqref{lowerupper} and \eqref{sup-convergence-Y}, we have
%\[
%Y^{i}_t \leq \frac{c_2}{T-t}+\frac{c_2}{3}(T-t),
%\]
%which implies that
%\begin{equation*}
%	\begin{split}
%		0\leq \mathbb E\left[ \left. Y^{\alpha_{\hat t}}_{\hat t}\hat X^2_{\hat t} \;\right|\mathcal\; F_t \right] \leq&~ \frac{C}{T-\hat t}\mathbb E\left[\left.\left(\int_{\hat t}^T\hat\xi_s\ds \right)^2 \;\right|\mathcal\; F_t \right] + \frac{C}{T-\hat t}\mathbb E\left[\left. \int_t^T\int_{\mathcal E} \hat\beta^2_s(e)\nu(\de)\ds \;\right|\mathcal\; F_t \right]\\
%		\leq&~ C\mathbb E\left[ \left. \int_{\hat t}^T \hat\xi^2_s\ds \;\right|\mathcal\; F_t \right] + \frac{C}{T-\hat t}\mathbb E\left[\left. \int_t^T\int_{\mathcal E} \hat\beta^2_s(e)\nu(\de)\ds \;\right|\mathcal\; F_t \right]\\
%		\overset{\hat t\nearrow T}{\longrightarrow}&~ C\mathbb E\left[ \left. \int_{\mathcal E}\lim_{s\nearrow T} \hat\beta^2_s\nu(\de) \;\right|\mathcal\; F_t \right]\\
%		=&~0,\qquad (\textrm{by }\eqref{optimal-control} \textrm{ and }\hat X_{T-}=0).
%	\end{split}
%\end{equation*}
%}

{\bf Step 2: $(\hat\xi,\hat\beta)$ is optimal.}
On the one hand,
taking conditional expectations on both sides of \eqref{veri}, we have for $t<T$
\begin{equation*}
	\begin{split}
	 Y^{i}_{t} x^2 =&~\E\left[\left.Y^{\alpha_{\hat t}}_{\hat t}\hat X_{\hat t}^2 \;\right|\mathcal\; F_t \right]+\E\left[ \left. \int_t^{\hat t} \left(\eta_s^{\alpha_{s-}}\hat \xi_s^2+\lambda_s^{\alpha_{s-}}\hat X_s^2
	+\int_{\cE}\gamma_s^{\alpha_{s-}}(e)|\hat \beta_s(e)|^2\,\nu(\de)\right)\ds \;\right|\mathcal F_t \right]\\
	\geq&~\E\left[ \left. \int_t^{\hat t} \left(\eta_s^{\alpha_{s-}}\hat \xi_s^2+\lambda_s^{\alpha_{s-}}\hat X_s^2
	+\int_{\cE}\gamma_s^{\alpha_{s-}}(e)|\hat \beta_s(e)|^2\,\nu(\de)\right)\ds \;\right|\mathcal F_t \right].
	\end{split}
\end{equation*}
Taking the limit $\hat t\nearrow T$ and using the monotone convergence theorem, we have
\begin{equation}\label{geq}
	\begin{split}
	Y^{i}_{t} x^2 \geq &~ \E\left[ \left. \int_t^T \left(\eta_s^{\alpha_{s-}}\hat \xi_s^2+\lambda_s^{\alpha_{s-}}\hat X_s^2
	+\int_{\cE}\gamma_s^{\alpha_{s-}}(e)|\hat \beta_s(e)|^2\,\nu(\de)\right)\ds \;\right|\mathcal F_t \right]\\
	=&~J_t(\hat\xi,
	\hat \beta;x, i)\geq V_t(x,i).
	\end{split}
\end{equation}
On the other hand, for any $(\xi,\beta)\in\mathcal{A}_0$, we have $J_t(\xi,\beta;x,i)= J^L_t(\xi,\beta;x,i)$, where $J_t^L$ was defined in \eqref{cost-L}. It implies that for each $L\geq 0$
\begin{equation}
	\begin{split}
	V_t(x,i)=\inf_{(\xi,\beta)\in\mathcal{A}_0}J_t(\xi,\beta;x,i)
	= &~\inf_{(\xi,\beta)\in\mathcal{A}_0}J^L_t(\xi,\beta;x,i)\\
	\geq&~ \inf_{(\xi,\beta)\in L^2_{\mathbb F}(\mathbb R)\times L^2_{\mathcal P^{\mathbb F}}(\mathbb R)}J^L_t(\xi,\beta;x,i)=V^L_t(x,i)=Y^{L,i}_{t}x^2.
	\end{split}
\end{equation}
By \eqref{sup-convergence-Y} we obtain
$
	Y^{i}_{t}x^2 =\lim_{L\nearrow\infty}Y^{L,i}_{t}x^2 \leq V_t(x,i)
$,
which together with \eqref{geq} implies \eqref{value} as well as the optimality of $(\hat \xi,\hat\beta)$.
\end{proof}

\subsection{The uniqueness result of the BSDE system \eqref{BSDE}}
\begin{theorem}\label{thm:uniqueness}%[Uniqueness of the BSDE \eqref{BSDE}]
	The minimal solution established in Theorem \ref{thm:existence} is unique in the sense of Definition \ref{def:solution} such that $Y^i_t>0$ for each $i\in\mathcal M$ and $t\in[0,T]$.
\end{theorem}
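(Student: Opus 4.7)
Let $(\bar Y^i, \bar Z^i, \bar \Psi^i)_{i \in \cM}$ be an arbitrary solution of \eqref{BSDE} in the sense of Definition \ref{def:solution} with $\bar Y^i_t > 0$, and let $(Y^i, Z^i, \Psi^i)_{i \in \cM}$ denote the minimal solution of Theorem \ref{thm:existence}. Proposition \ref{prop:minimal}, applied with the trivial lower bound $S \equiv 0$, yields $\bar Y^i \geq Y^i$. It thus suffices to prove the reverse inequality, and the plan is to adapt the verification argument of Theorem \ref{thm:verification} to $\bar Y$.

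The key preliminary, and the main obstacle, is an a priori $L^1$ estimate $\mathbb E[\bar Y^i_t] \leq C_0 + C_1/(T-t)$. Since $\bar Y^i \in S^2_{\mathbb G}(0, T-\delta)$ for each $\delta > 0$, taking expectations in \eqref{BSDE} eliminates the martingale parts and $m^i(t) := \mathbb E[\bar Y^i_t]$ is absolutely continuous on $[0, T)$ with $m^i(t) \to \infty$ at $T$ by Fatou applied to the singular terminal. Using Jensen's inequality $\mathbb E[(\bar Y^i)^2] \geq (m^i)^2$, $\eta^i \leq \eta^\star$ and the non-negativity of the Poisson integral (since $\bar Y^i > 0$),
\begin{align*}
	\frac{dm^i}{dt} \geq \frac{(m^i)^2}{\eta^\star} - \lambda^\star - \sum_{j \in \cM} q^{ij} m^j.
\end{align*}
Summing over $i$ and applying Cauchy-Schwarz $\sum_i (m^i)^2 \geq \tilde m^2/\ell$ with $\tilde m := \sum_i m^i$, the chain coupling is absorbed into a linear-in-$\tilde m$ term and I obtain the scalar Riccati inequality $d\tilde m/dt \geq \tilde m^2/(\ell \eta^\star) - c_1 \tilde m - c_2$. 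Combined with $\tilde m(t) \to \infty$, comparison with the pure Riccati ODE gives $\tilde m(t) \leq C/(T-t)$ for $t$ near $T$, and hence the claimed bound on each $m^i$.

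With this in hand, I apply the submartingale inequality for $\bar Y^{\alpha_\cdot}_\cdot X^2_\cdot + \int_t^\cdot (\text{running cost}) \dr$ along the admissible linear strategy $\xi_s = X_s/(T-s)$, $\beta_s \equiv 0$, whose trajectory $X_s = x(T-s)/(T-t)$ is deterministic. The submartingale property follows from an It\^o expansion paralleling \eqref{veri}: the $\xi$-Hamiltonian gap $\eta(\xi - X\bar Y^\alpha/\eta)^2$ is pointwise non-negative, and $\beta \equiv 0$ trivialises the jump contribution. The boundary term equals $x^2(T-\hat t)^2/(T-t)^2 \cdot \mathbb E[\bar Y^{\alpha_{\hat t}}_{\hat t} | \mathcal F_t]$ and vanishes as $\hat t \nearrow T$ by the $L^1$ estimate, while the conditional running cost is dominated by $x^2 U(t)$ with $U(t) := \eta^\star/(T-t) + \lambda^\star(T-t)/3$. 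Passing to the limit yields the pathwise bound $\bar Y^i_t \leq U(t)$ almost surely, which combined with the lower bound $Y^i_s \geq 1/(e^{\check c(T-s)}-1)$ from Lemma \ref{lem:bound-Y} produces $\bar Y^i \leq C_Y Y^i$ for some constant $C_Y$ depending only on the parameters.

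For the conclusion I reapply the submartingale inequality along the $Y$-optimal feedback $(\hat \xi, \hat \beta)$ of Theorem \ref{thm:verification}. Writing $\mathcal A := \bar Y^\alpha + \bar \Psi^\alpha$, $B := Y^\alpha + \Psi^\alpha > 0$, $C := \gamma^\alpha + B > 0$, $D := \gamma^\alpha + \mathcal A$, a direct pointwise computation shows that the $\beta$-part of the Hamiltonian gap for $\bar Y$ along this strategy equals $X^2(\mathcal A C - BD)^2/(DC^2) \geq 0$ on $\{\mathcal A > 0\}$ and reduces to $X^2 B[\gamma^\alpha B - \mathcal A(B+2\gamma^\alpha)]/C^2 \geq 0$ on $\{\mathcal A \leq 0\}$; thus the submartingale inequality holds regardless of the sign of $\bar\Psi$. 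Combining $\bar Y \leq C_Y Y$ with $\mathbb E[Y^{\alpha_{\hat t}}_{\hat t} \hat X^2_{\hat t}] \to 0$ from Theorem \ref{thm:verification} gives $\mathbb E[\bar Y^{\alpha_{\hat t}}_{\hat t} \hat X^2_{\hat t}] \to 0$, and passing $\hat t \nearrow T$ produces
\begin{align*}
	\bar Y^i_t x^2 \leq 0 + J_t(\hat \xi, \hat \beta; x, i) = V_t(x,i) = Y^i_t x^2,
\end{align*}
so $\bar Y^i_t \leq Y^i_t$, completing the uniqueness proof.
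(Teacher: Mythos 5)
Your proposal is correct and shares the paper's overall architecture---minimality (Proposition \ref{prop:minimal} with $S\equiv 0$) gives $\bar Y\geq Y$, and a verification argument bounding $\bar Y^i_t$ by the value function gives $\bar Y\leq Y$---but the two key technical steps are implemented by genuinely different means. For the a priori bound $\bar Y^i_t\leq \eta^\star/(T-t)+\lambda^\star(T-t)/3$, the paper invokes the multidimensional comparison theorem (Proposition \ref{prop:comparison}) against the deterministic singular ODE $Y^\delta$; you instead derive a Riccati differential inequality for $m^i(t)=\mathbb E[\bar Y^i_t]$ (which is legitimate: the martingale parts are true martingales on $[0,\hat t]$ for $\hat t<T$, Jensen and the sign of the jump term give the quadratic lower bound, and the chain coupling is linear since $m^j\geq 0$) and then bootstrap to the pathwise bound by testing the It\^o/submartingale inequality against the deterministic linear liquidation strategy. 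For the boundary term, the paper proves $\mathbb E[Y^{\alpha_{\hat t}}(X^{\xi,\beta}_{\hat t})^2\,|\,\mathcal F_t]\to 0$ for \emph{every} strategy in $\mathcal A_{0,b}$ via the backward representation of $X$ and then takes the infimum over $\mathcal A_{0,b}$, whereas you test only against the optimal feedback $(\hat\xi,\hat\beta)$, using $\bar Y\leq C_Y Y$ and the convergence $\mathbb E[Y^{\alpha_{\hat t}}\hat X^2_{\hat t}\,|\,\mathcal F_t]\to 0$ already established in Theorem \ref{thm:verification}. Your route has two advantages: it avoids applying the comparison principle to a comparison function that itself blows up, and your explicit sign analysis of $\mathcal A=\bar Y^{\alpha}_-+\bar\Psi^{\alpha}$ in the completion of squares (both cases checked correctly) covers a point the paper's Step 3 passes over silently. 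Two details you should make explicit: the boundary term in the deterministic-strategy step is $x^2(T-\hat t)^2(T-t)^{-2}\,\mathbb E[\bar Y^{\alpha_{\hat t}}_{\hat t}\,|\,\mathcal F_t]$, and your \emph{unconditional} $L^1$ estimate only yields $L^1$-convergence of this quantity to zero, so you must extract an a.s.\ convergent subsequence $\hat t_n\nearrow T$ before concluding the pathwise bound $\bar Y^i_t\leq U(t)$; and each submartingale inequality requires verifying that the stochastic integrals are true martingales on $[t,\hat t]$, which follows from $\bar Y\in S^2_{\mathbb G}(0,\hat t;\R)$, $\bar Z,\bar\Psi\in L^2$ on $[0,\hat t]$ and the boundedness of the state. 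With these points filled in, the argument is complete.
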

\begin{proof}
	Let $(Y^i,Z^i,\Psi^i)_{i\in\mathcal M}$ be any solution of the BSDE system \eqref{BSDE}. Our goal is to prove that $Y^i$ coincides with the value function $V_t(1,i)$. By \eqref{value}, uniqueness follows.
	
	{\bf Step 1.} Estimate for $Y^i$. Let $\delta>0$ and define
	\[
		Y^\delta_t = \frac{ \eta^\star }{T-\delta-t } + \frac{\lambda^\star(T-\delta-t)}{3}, \quad 0\leq t < T-\delta,
	\]
where we recall $\eta^\star$ and $\lambda^\star$ are the constants in Assumption \ref{ass1}. It follows that
\[
	-\dd Y^\delta_t = \left( \lambda^\star - \frac{2Y^\delta_t}{T-\delta -t} + \frac{\eta^\star}{(T-\delta-t)^2} \right)\dt - Z^\delta_t\dw_t,\quad \lim_{t\nearrow T-\delta}Y^{\delta}_t=+\infty,
\]
where $Z^\delta_t\equiv0$.
By Proposition \ref{prop:comparison}, we have $Y^i_t\leq Y^\delta_t$ for each $t\in[0,T-\delta)$.
Letting $\delta\downarrow 0$, we have $Y^i_t\leq \lim_{\delta\downarrow 0} Y^{\delta}_t = \frac{\eta^\star}{T-t}+\frac{\lambda^\star(T-t)}{3}$, for $0\leq t<T$.

In the subsequent {\bf Step 2} and {\bf Step 3}, w.l.o.g., we restrict our attention to the admissible space
$$
\mathcal A_{0,b}:=\left\{ (\xi,\beta)\in L^2_\mathbb{F}(\mathbb{R})\times L^{2}_{\mathcal{P}^{\mathbb F}}(\mathbb{R}):~X^{\xi,\beta}_T=0,~\beta\textrm{ is bounded and }\lim_{s\nearrow T}\beta_s=0 \right\},
$$
noting $(\hat\xi,\hat\beta)\in \mathcal A_{0,b}$ from \eqref{optimal-control} and \eqref{eq:xi-L2}.

{\bf Step 2.} In this step, we prove that $\lim_{\hat t\nearrow T}\mathbb E\left[\left.Y^{\alpha_{\hat t}}_{\hat t} \left(X^{\xi,\beta}_{\hat t}\right)^2 \;\right|\mathcal F_t \right]=0$, for any $(\xi,\beta)\in\mathcal A_{0,b}$.
By the liquidation constraint, $X^{\xi,\beta}$ can be rewritten as a backward equation
\[
	 X^{\xi,\beta}_{\hat t} = \int_{\hat t}^T \xi_s\ds + \int_{\hat t}^T\int_{\mathcal E} \beta_s(e) N(\ds,\de):=\int_{\hat t}^T \xi_s\ds + \widetilde X_{\hat t}^{\beta}.
\]
By It\^o's formula, we have $(\widetilde X^\beta_t)^2=\int_t^T\int_{\mathcal E} (2\widetilde X^\beta_{s-}\beta_s(e)-\beta^2_s(e))\,N(\ds,\de) $.
By {\bf Step 1}, we have
\begin{equation*}
	\begin{split}
		0\leq&~ \mathbb E\left[ \left. Y^{\alpha_{\hat t}}_{\hat t} \left(X^{\xi,\beta}_{\hat t}\right)^2 \;\right|\mathcal F_t \right] \\
		 \leq&~ \frac{C}{T-\hat t}\mathbb E\left[\left.\left(\int_{\hat t}^T \xi_s\ds \right)^2 \;\right|\mathcal F_t \right] + \frac{C}{T-\hat t}\mathbb E\left[\left. \int_{\hat t}^T\int_{\mathcal E} (2\widetilde X^\beta_{s-}\beta_s(e)- \beta^2_s(e))\,N(\ds,\de) \;\right|\mathcal F_t \right] +O(T-\hat t)\\
		\leq&~ C\mathbb E\left[ \left. \int_{\hat t}^T \xi^2_s\ds \;\right|\mathcal F_t \right] + \frac{C}{T-\hat t}\mathbb E\left[\left. \int_{\hat t}^T\int_{\mathcal E} (2\widetilde X^\beta_{s-}\beta_s(e)- \beta^2_s(e))\,\nu(\de)\ds \;\right|\mathcal F_t \right] +O(T-\hat t) \\
	\overset{\hat t\nearrow T}{\longrightarrow}&~ 	C\mathbb E\left[ \left. \int_{\mathcal E}\lim_{s\nearrow T} (2\widetilde X^\beta_{s-}\beta_s(e)-\beta^2_s(e))\,\nu(\de) \;\right|\mathcal F_{t} \right]\\
		=&~0.%,\qquad (\textrm{by }\eqref{optimal-control} \textrm{ and }\hat X_{T-}=0).
	\end{split}
\end{equation*}

{\bf Step 3.}
For each $i\in\mathcal M$ and let $\alpha_t=i$. For any $(\xi,\beta)\in\mathcal A_{0,b}$, applying It\^o's formula to $Y^\alpha (X^{\xi,\beta})^2$ from $t$ to $\hat t>t$, we have
\begin{equation*}
	\begin{split}
	&~ Y^{\alpha_{\hat t}}_{\hat t} \left(X^{\xi,\beta}_{\hat t}\right)^2\\
	=&~Y^{\alpha_t}_t X^2_{t}+\int_{t}^{\hat t} Y^{\alpha_{s-}}_{s}\left\{ 2X^{\xi,\beta}_{s}\left(-\xi_s-\int_{\mathcal E}\beta_s(e)\,\nu(\de)\right) + \int_{\mathcal E}\beta^2_s(e)\,\nu(\de) \right\}\ds\\
	&~-\int_t^{\hat t}\left(X^{\xi,\beta}_{s}\right)^2 f^{\alpha_{s-}}(s, Y^{\alpha_s}_s,\Psi^{\alpha_s}_s )\ds + \int_t^{\hat t}\int_{\mathcal E} \Psi^{\alpha_{s-}}_s(e) \left( \beta^2_s(e)-2X^{\xi,\beta}_{s-}\beta_s(e) \right)\nu(\de)\ds \\
	&~+ \int_t^{\hat t} \left(X^{\xi,\beta}_{s}\right)^2 \sum_{k\in\mathcal M} q^{\alpha_{s-}k}Y^k_s\ds+\int_t^{\hat t} (Y^{\alpha_{s-}}_{s-}+\Psi^{\alpha_{s-}}_s(e) ) \int_{\mathcal E} \left( \beta^2_s(e) - 2X^{\xi,\beta}_{s-}\beta_{s}(e) \right)\widetilde N(\ds,\de) \\
	&~+ \int_t^{\hat t} \left(X^{\xi,\beta}_{s-}\right)^2\int_{\mathcal E}\Psi^{\alpha_{s-}}_s(e)\widetilde N(\ds,\de)\\
	&~+\int_t^{\hat t}\left(X^{\xi,\beta}_{s}\right)^2 Z^{\alpha_{s-}}_s\dw_s+\int_t^{\hat t}\left( X^{\xi,\beta}_{s-}\right)^2 \sum_{k,k'\in\mathcal M} (Y^k_s-Y^{k'}_s) \mathbf{1}_{ \{ \alpha_{s-}=k' \} }\dd\widetilde N^{k'k}_s,
	\end{split}
\end{equation*}
which implies that by taking the expression of $f^{\alpha}$ \eqref{def:f} into account
\begin{align*}
%	\begin{split}
		&~	Y^{\alpha_{\hat t}}_{\hat t} \left(X^{\xi,\beta}_{\hat t}\right)^2 + \int_t^{\hat t} \left(\eta^{\alpha_{s-}}_s \xi^2_s + \lambda^{\alpha_{s-}}_s \left(X^{\xi,\beta}_s\right)^2 + \int_{\mathcal E} \gamma^{\alpha_{s-}}_s(e) \beta^2_s(e)\,\nu(\de) \right)\ds\\
		=&~Y^{i}_t X^2_{t} + \int_t^{\hat t} \left( -2X^{\xi,\beta}_{s} Y_{s}^{\alpha_{s-}}\xi_s +\eta^{\alpha_{s-}}_s \xi^2_s + \frac{\left(X^{\xi,\beta}_{s}\right)^2(Y^{\alpha_{s-}}_{s})^2}{
		\eta_s^{\alpha_{s-}}} \right)\ds \\ 											
		&~+\int_{t}^{\hat t}\int_{\mathcal E} \bigg\{ \frac{\left(X^{\xi,\beta}_{s}\right)^2(Y^{\alpha_{s-}}_{s}+\Psi^{\alpha_{s-}}_s(e))^2}{\gamma^{\alpha_{s-}}_s(e)+Y^{\alpha_{s-}}_{s}+\Psi^{\alpha_{s-}}_s(e)} - 2X^{\xi,\beta}_{s}(Y^{\alpha_{s-}}_{s}+\Psi^{\alpha_{s-}}_s) \beta_s(e) \\
		&~\qquad\qquad\qquad+ \left( \gamma^{\alpha_{s-}}_s(e)+ Y^{\alpha_{s-}}_{s}+\Psi^{\alpha_{s-}}_s(e) \right)\beta^2_s(e) \bigg\} \nu(\de)\ds\\
		&~+\int_t^{\hat t} (Y^{\alpha_{s-}}_{s-}+\Psi^{\alpha_{s-}}_s(e)) \int_{\mathcal E} \left( \beta^2_s(e) - 2X^{\xi,\beta}_{s-}\beta_{s}(e) \right)\widetilde N(\ds,\de) + \int_t^{\hat t} \left(X^{\xi,\beta}_{s-}\right)^2\int_{\mathcal E}\Psi^{\alpha_{s-}}_s(e)\widetilde N(\ds,\de)\\
		&~+\int_t^{\hat t}\left(X^{\xi,\beta}_{s}\right)^2 Z^{\alpha_{s-}}_s\dw_s+\int_t^{\hat t} \left(X^{\xi,\beta}_{s-}\right)^2 \sum_{k,k'\in\mathcal M} (Y^k_s-Y^{k'}_s) \mathbf{1}_{ \{ \alpha_{s-}=k' \} }\dd\widetilde N^{k'k}_s\\
		=&~Y^{i}_t X^2_{t} + \int_t^{\hat t} \eta^{\alpha_{s-}}_s\left( \xi_s-\frac{Y^{\alpha_{s-}}_{s}X^{\xi,\beta}_{s}}{\eta_s} \right)^2\ds \\
		&~+ 	 \int_{t}^{\hat t}\int_{\mathcal E} \left( \gamma^{\alpha_{s-}}_s(e)+ Y^{\alpha_{s-}}_{s-}+\Psi^{\alpha_{s-}}_s(e) \right) \left(\beta_s(e) - \frac{ ( Y^{\alpha_{s-}}_{s-} + \Psi^{\alpha_{s-}}_s(e) )X^{\xi,\beta}_{s-} }{ \gamma^{\alpha_{s-}}_s(e)+ Y^{\alpha_{s-}}_{s-}+\Psi^{\alpha_{s-}}_s(e) } \right)^2 						 \nu(\de)\ds\\
		&~+\int_t^{\hat t} \left(Y^{\alpha_{s-}}_{s-}+\Psi^{\alpha_{s-}}_s(e)\right) \int_{\mathcal E} \left( \beta^2_s(e) - 2X^{\xi,\beta}_{s-}\beta_{s}(e) \right)\widetilde N(\ds,\de) + \int_t^{\hat t} X^2_{s-}\int_{\mathcal E}\Psi^{\alpha_{s-}}_s(e)\widetilde N(\ds,\de)\\
		&~+\int_t^{\hat t}X^2_{s} Z^{\alpha_{s-}}_s\dw_s+\int_t^{\hat t} \left(X^{\xi,\beta}_{s-}\right)^2 \sum_{k,k'\in\mathcal M} (Y^k_s-Y^{k'}_s) \mathbf{1}_{ \{ \alpha_{s-}=k' \} }\dd\widetilde N^{k'k}_s.
%	\end{split}
\end{align*}
Taking conditional expectations, we get
\begin{equation*}
	\begin{split}
		Y^{i}_t X^2_t \leq \mathbb E\left[ \left. Y^{\alpha_{\hat t}}_{\hat t} \left(X^{\xi,\beta}_{\hat t}\right)^2 + \int_t^{\hat t} \left(\eta^{\alpha_{s-}}_s \xi^2_s + \lambda^{\alpha_{s-}}_s \left(X^{\xi,\beta}_s\right)^2 + \int_{\mathcal E} \gamma^{\alpha_{s-}}_s(e) \beta^2_s(e)\,\nu(\de) \right)\ds \;\right|\mathcal F_t \right].
	\end{split}
\end{equation*}
Letting $\hat t\nearrow T$, by {\bf Step 2}, it holds
\[
		Y^{i}_t X^2_t \leq \mathbb E\left[ \left. \int_t^{T} \left(\eta^{\alpha_{s-}}_s \xi^2_s + \lambda^{\alpha_{s-}}_s \left(X^{\xi,\beta}_s\right)^2 + \int_{\mathcal E} \gamma^{\alpha_{s-}}_s(e) \beta^2_s(e)\,\nu(\de) \right)\ds \;\right|\mathcal F_t \right].% = V_t(\hat X^2_t).
\]
By the arbitrariness of $(\xi,\beta)$, we have $Y^{i}_t X^2_t\leq V_t(X_t,i)$. In particular, $Y^i_t\leq V_t(1,i)$.
Thus, $Y^i$ is the minimal solution of \eqref{BSDE} by \eqref{value} and the uniqueness result follows.
\end{proof}

\section{Conclusion}
In this paper, we examined a stochastic control problem with regime switching, which arises in the context of an optimal liquidation problem. To address this, we solved a system of BSDEs with singular terminal values, marking the first instance of such a solution in the literature. For the existence result, we initially considered a truncated version of the BSDE system and then allowed the truncation to approach infinity. To establish a limit, we employed a multidimensional comparison theorem. Compared to the existence result, our uniqueness result-which is novel even in the absence of regime switching-holds greater significance for the literature. To demonstrate uniqueness, we proved that the solution obtained by taking the limit is a minimal solution, and through a verification argument, we showed that all other solutions are no larger than this minimal solution.

\appendix

\section{Proof of Proposition \ref{prop:comparison}}
In the proof, $C$ is a positive constant that may vary from line to line.

Let $\Delta Y^i=Y^i-\overline Y^i$, $\Delta Z^i=Z^i-\overline Z^i$ and $\Delta\Psi^i=\Psi^i-\overline\Psi^i$.
By It\^o's formula, we have by Condition $1$
\begin{equation}\label{eq:proof-comparison-1}
	\begin{split}
		&~[ ( \Delta Y^i_t)^+]^2 \\
		=&~ \int_t^T 2( \Delta Y^i_{s})^+\left( f^i(s,Y_{s},\Psi^i_s) - \overline f^i(s,\overline Y_{s}, \overline\Psi^i_s) \right) \ds - \int_t^T 1_{ \{ \Delta Y^i_{s}>0 \} } | \Delta Z^i_s |^2\ds\\
		&~-\int_t^T\int_{\mathcal E} \Big( [( \Delta Y^i_{s}+\Delta\Psi^i_s(e) )^+]^2 - [( \Delta Y^i_{s} )^+]^2 -2(\Delta Y^i_{s})^+\Delta\Psi^i_s(e) \Big)\nu(\de)\ds \\
		&~-\int_t^T 2(\Delta Y^i_{s-})^+(\Delta Z^i_s)^\top\dw_s + \int_t^T\int_{\mathcal E} \Big( [(\Delta Y^i_{s-}+\Delta\Psi^i_s(e))^+]^2 - [ (\Delta Y^i_{s-})^+ ]^2 \Big)\widetilde N(\ds,\de).
	\end{split}
\end{equation}
First, we consider the estimate of the first term on the right hand side. By the assumptions on the drivers, it holds that
\begin{equation*}
	\begin{split}
		&~ f^i(s,Y_{s},\Psi^i_s) - \overline f^i(s,\overline Y_{s}, \overline\Psi^i_s) \leq 	 f^i(s,Y_{s},\Psi^i_s) - f^i(s,\overline Y_{s}, \overline\Psi^i_s)\qquad (\textrm{by Condition 5})\\
		=&~\Big( f^i(s,Y_{s},\Psi^i_s) - f^i(s,Y_{s},\overline\Psi^i_s) \Big) + \Big( f^i(s,Y_{s},\overline\Psi^i_s) -f^i(s,\overline Y^i_{s}, Y^{-i}_{s},\overline\Psi^i_s) \Big) + \Big( f^i(s,\overline Y^i_{s},Y^{-i}_{s}, \overline \Psi^i_s) - f^i(s,\overline Y_{s}, \overline\Psi^i_s) \Big)\\
		\leq&~ \int_{\mathcal E}| \Delta\Psi^i_s(e) |\nu(\de)	\qquad 	\qquad (\textrm{by Condition }4)	 \\
		&~+ \Big( f^i(s,Y_{s},\overline\Psi^i_s) -f^i(s,\overline Y^i_{s}, Y^{-i}_{s},\overline\Psi^i_s) \Big) \\
		&~+ C\sum_{j\neq i}( \Delta Y^j_{s} )^+ \quad \qquad \qquad (\textrm{by Condition }2).
	\end{split}
\end{equation*}
Thus, the first term on the right hand side of \eqref{eq:proof-comparison-1} can be estimated as follows
\begin{equation*}
	\begin{split}
		&~\int_t^T 2(\Delta Y^i_{s})^+ \left( f^i(s,Y_{s},\Psi^i_s) - \overline f^i(s,\overline Y_{s}, \overline\Psi^i_s) \right) \ds \\
		\leq&~ 2\int_t^T(\Delta Y^i_{s})^+\int_{\mathcal E}| \Delta \Psi^i_s(e)|\, \nu(\de)\ds + C\int_t^T ((\Delta Y^i_s)^+)^2 \ds+ C\int_t^T (\Delta Y^i_{s})^+\sum_{j\neq i}(\Delta Y^j_s)^+ \ds\qquad (\textrm{by Condition }3).
	\end{split}
\end{equation*}
%where $c^i_s(e)=\left\{\begin{matrix} 1,\quad \textrm{ if }\Delta\Psi^i_s(e)\geq 0, \\ -1,\qquad \textrm{ if }\Delta\Psi^i_s(e)<0. \end{matrix}\right.$

Taking the above estimate into \eqref{eq:proof-comparison-1} and taking conditional expectations, we have
\begin{align*}
%	\begin{split}
		[(\Delta Y^i_t)^+]^2 \leq&~ C\mathbb E\left[\left. 	\int_t^T ((\Delta Y^i_s)^+)^2 \ds	+ \int_t^T (\Delta Y^i_{s})^+\sum_{j\neq i}(\Delta Y^j_s)^+ \ds 	 \;\right|\mathcal G_t \right]\\
		&~- \mathbb E\left[\left. \int_t^T\int_{\mathcal E} \Big( [( \Delta Y^i_{s}+\Delta\Psi^i_s(e) )^+]^2 - [( \Delta Y^i_{s} )^+]^2 -2 (\Delta Y^i_{s})^+\Delta\Psi^i_s(e) -2 (\Delta Y^i_{s})^+|\Psi^i_s(e)| \Big)\,\nu(\de)\ds \;\right|\mathcal G_t \right]\\
		\leq&~ C\sum_{i=1}^\ell\mathbb E\left[\left. \int_t^T ( (\Delta Y^i_s)^+ )^2\ds \;\right|\mathcal G_t \right] + C \mathbb E\left[\left. \int_t^T (( \Delta Y^i_s )^+)^2\ds \;\right|\mathcal G_t \right] \\
		\leq&~ C\sum_{i=1}^\ell\mathbb E\left[\left. \int_t^T ( (\Delta Y^i_s)^+ )^2\ds \;\right|\mathcal G_t \right].
%	\end{split}
\end{align*}
Let $\epsilon=\frac{1}{2 C\ell}$ and define $\mathcal Y(\epsilon)=\esssup_{\omega\in\Omega,s\in[T-\epsilon,T],i\in\mathcal M}| (\Delta Y^i_s)^+ |^2$. The above implies that $\mathcal Y(\epsilon)\leq C\ell \epsilon\mathcal Y(\epsilon)=\frac{1}{2}\mathcal Y(\epsilon)$. Thus, $(\Delta Y^i)^+\equiv 0$ on $[T-\epsilon,T]$ for all $i\in\mathcal M$. In particular, $(\Delta Y^i_{T-\epsilon})^+=0$. Starting from $T-\epsilon$ and repeating the above analysis for finitely many times, we get $(\Delta Y^i)^+\equiv 0$ on $[0,T]$, completing the proof.

\bibliography{Fu}

\begin{thebibliography}{10}

\bibitem{AC-2001}
R.~Almgren and N.~Chriss.
\newblock Optimal execution of portfolio transactions.
\newblock {\em Journal of Risk}, 3(2):5--39, 2001.

\bibitem{AJK-2014}
S.~Ankirchner, M.~Jeanblanc, and T.~Kruse.
\newblock {BSDE}s with singular terminal condition and a control problem with
  constraints.
\newblock {\em SIAM Journal on Control and Optimization}, 52(2):893--913, 2014.

\bibitem{Bian-2016}
B.~Bian, N.~Wu, and H.~Zheng.
\newblock Optimal liquidation in a finite time regime switching model with
  permanent and temporary pricing impact.
\newblock {\em Discrete and Continuous Dynamical Systems-B}, 21(5):1401--1420,
  2016.

\bibitem{GH-2017}
P.~Graewe and U.~Horst.
\newblock Optimal trade execution with instantaneous price impact and
  stochastic resilience.
\newblock {\em SIAM Journal on Control and Optimization}, 55(5):3707--3725,
  2017.

\bibitem{GHQ-2015}
P.~Graewe, U.~Horst, and J.~Qiu.
\newblock A non-{M}arkovian liquidation problem and backward {SPDE}s with
  singular terminal conditions.
\newblock {\em SIAM Journal on Control and Optimization}, 53(2):690--711, 2015.

\bibitem{GHS-2018}
P.~Graewe, U.~Horst, and E.~S\'er\'e.
\newblock Smooth solutions to portfolio liquidation problems under
  price-sensitive market impact.
\newblock {\em Stochastic Processes and Their Applications}, 128(3):979--1006,
  2018.

\bibitem{GP-2021}
P.~Graewe and A.~Popier.
\newblock Asymptotic approach for backward stochastic differential equation
  with singular terminal condition.
\newblock {\em Stochastic Processes and Their Applications}, 133:247--277,
  2021.

\bibitem{HN-2014}
U.~Horst and F.~Naujokat.
\newblock When to cross the spread? {T}rading in two-sided limit order books.
\newblock {\em SIAM Journal on Financial Mathematics}, 5:278--315, 2014.

\bibitem{HX-2019}
U.~Horst and X.~Xia.
\newblock Multi-dimensional optimal trade execution under stochastic
  resilience.
\newblock {\em Finance and Stochastics}, 23(4):889--923, 2019.

\bibitem{HX-2021}
U.~Horst and X.~Xia.
\newblock Continuous viscosity solutions to linear-quadratic stochastic control
  problems with singular terminal state constraint.
\newblock {\em Applied Mathematics and Optimization}, 84(1):1159--1184, 2021.

\bibitem{Hu2020}
Y.~Hu, G.~Liang, and S.~Tang.
\newblock Systems of ergodic {BSDE}s arising in regime switching forward
  performance processes.
\newblock {\em SIAM Journal on Control and Optimization}, 58(4):2503--2534,
  2020.

\bibitem{Hu2023}
Y.~Hu, X.~Shi, and Z.~Xu.
\newblock Comparison theorems for multi-dimensional {BSDE}s with jumps and
  applications to constrained stochastic linear-quadratic control.
\newblock {\em arXiv:2311.06512}, 2023.

\bibitem{Kratz-2014}
P.~Kratz.
\newblock An explicit solution of a nonlinear-quadratic constrained stochastic
  control problem with jumps: Optimal liquidation in dark pools with adverse
  selection.
\newblock {\em Mathematics of Operations Research}, 39(4):1198--1220, 2014.

\bibitem{Kratz-2018}
P.~Kratz and T.~Schöneborn.
\newblock Optimal liquidation and adverse selection in dark pools.
\newblock {\em Mathematical Finance}, 28(1):177--210, 2018.

\bibitem{KP-2016}
T.~Kruse and A.~Popier.
\newblock Minimal supersolutions for {BSDE}s with singular terminal condition
  and application to optimal position targeting.
\newblock {\em Stochastic Processes and Their Applications}, 126(9):2554--2592,
  2016.

\bibitem{Morlais2009}
M.-A. Morlais.
\newblock Utility maximization in a jump market model.
\newblock {\em Stochastics}, 81(1):1--27, 2009.

\bibitem{Pemy-2006}
M.~Pemy and Q.~Zhang.
\newblock Optimal stock liquidation in a regime switching model with finite
  time horizon.
\newblock {\em Journal of Mathematical Analysis and Applications},
  321(2):537--552, 2006.

\bibitem{Pemy-2008}
M.~Pemy, Q.~Zhang, and G.~Yin.
\newblock Liquidation of a large block of stock with regime switching.
\newblock {\em Mathematical Finance}, 18(4):629--648, 2008.

\bibitem{Popier-2006}
A.~Popier.
\newblock Backward stochastic differential equations with singular terminal
  condition.
\newblock {\em Stochastic Processes and Their Applications}, 116:2014--2056,
  2006.

\bibitem{Popier-2007}
A.~Popier.
\newblock Backward stochastic differential equations with random stopping time
  and singular final condition.
\newblock {\em Annals of Probability}, 35(3):1071--1117, 2007.

\bibitem{PZ-2019}
A.~Popier and C.~Zhou.
\newblock Second-order {BSDE} under monotonicity condition and liquidation
  problem under uncertainty.
\newblock {\em Annals of Applied Probability}, 29(3):1685--1739, 2019.

\bibitem{Protter2005}
P.~Protter.
\newblock {\em Stochastic Integration and Differential Equations}.
\newblock Springer, 2005.

\bibitem{SKP-2019}
A.~Sezer, T.~Kruse, and A.~Popier.
\newblock Backward stochastic differential equations with non-{M}arkovian
  singular terminal values.
\newblock {\em Stochastics and Dynamics}, 19(2):1950006, 2019.

\bibitem{SX2024}
X.~Shi and Z.~Xu.
\newblock Constrained stochastic linear quadratic control under regime
  switching with controlled jump size.
\newblock {\em arXiv:2412.19100}, 2024.

\bibitem{Elliott-2019}
C.C. Siu, I.~Guo, S.~Zhu, and R.J. Elliott.
\newblock Optimal execution with regime-switching market resilience.
\newblock {\em Journal of Economic Dynamics and Control}, 101:17--40, 2019.

\end{thebibliography}

\end{document}